\title{Polynomial time recognition of squares of ptolemaic graphs and 3-sun-free split graphs}
\author[1]{Van Bang Le}
\author[2]{Andrea Oversberg}
\author[2]{Oliver Schaudt}
\affil[1]{\small Institut f\"ur Informatik, Universit\"at Rostock, Rostock, Germany\\
\texttt{le@informatik.uni-rostock.de}}
\affil[2]{\small Institut f\"ur Informatik, Universit\"at zu K\"oln, K\"oln, Germany\\ 
\texttt{$\{$oversberg,schaudt$\}$@zpr.uni-koeln.de}}
\begin{document}

\newtheorem{theorem}{Theorem}
\newtheorem{lemma}{Lemma}
\newtheorem{observation}{Observation}
\newtheorem{corollary}{Corollary}
\newtheorem{claim}{Claim}
\newtheorem{conjecture}{Conjecture}

\newcommand{\dist}{\mbox{dist}}

\maketitle

\begin{abstract}
The square of a graph $G$, denoted $G^2$, is obtained from $G$ by putting an edge between two distinct vertices whenever their distance is two.
Then $G$ is called a square root of $G^2$.
Deciding whether a given graph has a square root is known to be NP-complete, even if the root is required to be a chordal graph or even a split graph.

We present a polynomial time algorithm that decides whether a given graph has a ptolemaic square root.
If such a root exists, our algorithm computes one with a minimum number of edges.

In the second part of our paper, we give a characterization of the graphs that admit a 3-sun-free split square root.
This characterization yields a polynomial time algorithm to decide whether a given graph has such a root, and if so, to compute one.\\

\noindent{\textbf{Keywords:}} Square of graphs, square of ptolemaic graphs, squares of split graphs, recognition algorithm.\\

\noindent{\textbf{2010 MSC:}} 05C75, 05C85, 68R05, 68R10.
\end{abstract}

\section{Introduction}
The \emph{square} of a graph $G$ is the graph $G^2$ obtained from $G$ by putting an edge between any two distinct vertices of distance~2.
Then $G$ is called the \emph{square root} of $G^2$.
While every graph has a square, not every graph admits a square root.
In fact, it is NP-complete to decide whether a given graph has a square root, as was shown by Motwani and Sudan~\cite{MS94}. 
Since then, squares of graphs and square roots have been intensively studied, in both graph theoretic and algorithmic aspects. See, for example, \cite{AA10,AA11,CCGKP13,FK12,FLLT12,Lau06,LT10a,MS13} for recent results on this topic.   

One successful approach to deal with this hardness is to ask for square roots that belong to a particular graph class.
This might be useful if one is interested in structural properties of the root graph, such as chordality, bipartiteness or girth conditions.
The negative results in this direction tell us that it is NP-complete to determine whether a graph has a square root that is either chordal~\cite{CL04}, split~\cite{CL04}, or of girth four~\cite{FLLT12}, and, recently announced in \cite{FK12}, of girth five. On the upside, there are polynomial time algorithms for computing a square root that is either a tree (see for example~\cite{CKL,Lau06,LT10b}), a bipartite graph~\cite{Lau06}, a proper interval graph~\cite{CL04}, a block graph~\cite{LT10b}, a strongly chordal split graph~\cite{LT11}, or a graph of girth at least six~\cite{FLLT12}.

\medskip
\noindent
\textbf{Ptolemaic square roots.} Note the contrast between the linear time algorithm for finding a block square root and the NP-hardness of finding a chordal square root.
It seems enticing to investigate what happens \emph{in between} these two classes.
Indeed, two reasonable intermediate graph classes are ptolemaic graphs and strongly chordal graphs.
In this paper, we solve the square root problem for ptolemaic graphs by proving the following main result.

\begin{theorem}\label{result}
It can be decided in $\mathcal O(n^4)$ time whether a given $n$-vertex graph has a ptolemaic square root.
If such a root exists, a ptolemaic square root with a minimum number of edges can be constructed in the same time.
\end{theorem}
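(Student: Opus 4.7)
The plan is to exploit the fact that ptolemaic graphs are chordal with a very restrictive clique structure, allowing a square root $G$ to be reconstructed from $H$ in a bottom-up fashion by peeling off simplicial vertices. The first building block is a structural lemma: \emph{if $G$ is a ptolemaic square root of $H$, then every maximal clique of $H$ is of the form $N_G[v]$ for some vertex $v \in V(G)$}. This gives a strong handle on $H$: since any chordal graph has at most $n$ maximal cliques, $H$ has at most $n$ maximal cliques arising from $G$, and each one comes with one or more candidate ``centers'' in $V(G)$. The proof should use the Helly property of maximal cliques in chordal graphs together with gem-freeness to rule out stray cliques in $G^2$ that would not already appear as a closed neighborhood in $G$.

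Given the lemma, the algorithm is an iterative peeling procedure. For every vertex $v$ of $H$ we have $N_H[v] = \bigcup_{u \in N_G[v]} N_G[u]$; in particular, if $v$ is a pendant of $G$ with unique neighbor $u$, then $N_H[v] = N_G[u]$, which is a maximal clique of $H$. This suggests: iteratively identify candidate pendants of $G$ as vertices $v$ whose closed neighborhood in $H$ coincides with a maximal clique of $H$, remove them from $H$, and repeat. When the peeling bottoms out, what remains should be a ``core'' ptolemaic graph whose square equals the reduced $H$ and which can be reconstructed directly from its clique tree. To achieve the minimum edge count, I would argue that the peeling is canonical in the sense that at each step the set of pendants to remove is forced up to inconsequential local choices, so that a greedy peeling order yields the sparsest ptolemaic root.

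The main obstacle I expect is the identification step. Ambiguity arises when several vertices are simplicial in $G$ at the same clique, or when pendant and non-pendant simplicial vertices coexist at the same neighborhood. Resolving this likely requires a careful case analysis of how closed neighborhoods of $H$ sit inside the maximal cliques of $H$, together with a global consistency check on the resulting edge set --- including verifying ptolemaicity of the constructed $G$, which translates into a forbidden-subgraph condition on $H$ derived from gem-freeness. The $\mathcal O(n^4)$ running time is consistent with iterating over pairs of vertices and performing $\mathcal O(n^2)$-time neighborhood or clique-membership checks per pair, a natural budget for a peeling algorithm with $\mathcal O(n)$ phases. Establishing that this greedy peeling is \emph{optimal} with respect to the edge count --- rather than merely producing some ptolemaic root --- is where I expect the most delicate structural work to lie.
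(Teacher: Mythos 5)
Your opening lemma is correct and is in fact one of the two pillars of the paper's proof: every maximal clique of the square is the closed neighborhood $N_G[v]$ of some vertex of the root (the paper calls such a $v$ a \emph{center}, Lemma~\ref{lem:centersexist}). From that point on, however, your peeling scheme has genuine gaps. First, peeling only removes degree-one vertices of the root, so the ``core'' it bottoms out on is an arbitrary ptolemaic root of minimum degree at least two --- for instance a chain of cliques consecutively glued along cliques of size at least two has no pendants at all, its square is not complete, and your procedure does nothing to it. The claim that such a core ``can be reconstructed directly from its clique tree'' is not a reduction; it is the original problem restated, and nothing in your sketch says how to do it. Second, the identification step is not sound as stated: a vertex $v$ with $N_{G^2}[v]$ equal to a maximal clique of the square need not be a pendant (any simplicial vertex of the root whose neighbors all have their closed neighborhoods inside one maximal closed neighborhood also has this property), and whether you re-attach such a $v$ as a pendant to a center or preserve its other root edges changes both the square and the edge count. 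You acknowledge this ambiguity, but ``forced up to inconsequential local choices'' is exactly the statement that needs proof, and the optimality of a greedy peeling order is asserted rather than argued.

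What is missing, compared with the paper, is the second pillar: a mechanism that forces edges of \emph{every} ptolemaic root from the clique structure of the square alone. The paper proves (via pseudo-$P_5$'s, Lemma~\ref{lem:gem}) that for every gem-triple $(A,B,C)$ of maximal cliques of the square, all edges between $A\cap C$ and $(A\cup C)\cap B$ must be present in any ptolemaic root (Lemma~\ref{lem:gem-triple-edges}); the same gem-triples also locate the centers up to adjacent twins (Lemma~\ref{lem:center-vs-triple}). The algorithm then adds these forced edges plus one star per maximal clique, and the delicate part of the correctness proof (Lemma~\ref{main-lemma}) is showing that this edge-minimal candidate is itself chordal and gem-free whenever any ptolemaic root exists; it is this argument, not a canonical decomposition, that simultaneously yields existence and edge-minimality. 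Your proposal contains no counterpart to either the forced-edge lemma or this verification argument, so as it stands it does not establish the theorem.
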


A long-standing problem in the research on graph powers is characterizing and recognizing powers of distance-hereditary graphs. This problem stems from a paper by Bandelt, Henkmann and Nicolai~\cite{BHN95}, who where the first to study powers of distance-hereditary graphs. They where, however, not able to give a full characterization of squares of distance-hereditary graphs and this problem remains unsolved until today.

We see our result as an important step towards the solution of the above mentioned problem. Not only is the class of ptolemaic graphs by far the largest subclass of distance-hereditary graphs for which the square root problem is solved. Previous results on subclasses of distance-hereditary graphs, namely the polynomial time algorithms for the recognition of squares of trees and squares of block graphs~\cite{LT10b} can also be subsumed under our result in the following sense.
An implicit feature of our algorithm is that if the input graph admits a square root that is a block graph or a tree, such a root is indeed computed.
(However, the best known algorithm to compute square roots in these two graph classes runs in linear time and is considerably simpler~\cite{LT10b}.)

The optimization aspect of our result is motivated by recent work of Cochefert, Couturier, Golovach, Kratsch and Paulusma~\cite{CCGKP13}.
They introduce the problem of minimizing or maximizing the number of edges in a square root.
Among other results, they give a polynomial time algorithm to compute a square root with a minimum number of edges in the class of graphs of maximum degree 6.

\medskip
\noindent
\textbf{3-sun-free split square roots.} 
As mentioned above, it is NP-complete to decide whether a given graph is the square of some split graph (\cite{CL04}).

A polynomially solvable case in computing split square roots reads as follows. 
A \emph{strongly chordal} graph is a chordal graph that does not contain any 
$\ell$-sun as an induced  subgraph; here an \emph{$\ell$-sun}, $\ell\ge 3$, consists of a stable set $\{u_1, u_2, \ldots, u_\ell\}$ and a clique $\{v_1, v_2, \ldots, v_\ell\}$ 
such that for $i \in \{1,\ldots, \ell\}$, $u_i$ is adjacent to exactly $v_i$ and 
$v_{i+1}$ (index arithmetic modulo $\ell$). 
There is a structural characterization of squares of strongly chordal split graphs, which leads to a quadratic time recognition algorithm (\cite{LT10a,LT11}).

Our second result, Theorem~\ref{thm:recogsquareofs3freesplit} below, extends the polynomially solvable case of strongly chordal split square roots to $3$-sun-free split square roots.
This leaves a larger degree of freedom for the square root, pushing our knowledge on polynomially solvable cases further towards the NP-complete case of general split square roots.

\begin{theorem}\label{thm:recogsquareofs3freesplit}
It can be decided in $\mathcal O(nm^2)$ time whether a given $n$-vertex $m$-edge graph has a $3$-sun-free split square root, 
and if so, such a square root can be constructed in the same time.
\end{theorem}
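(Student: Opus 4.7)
The plan is to reduce the problem to a constrained intersection representation on a canonically chosen clique, and to exploit the $3$-sun-free restriction for tractability. First, for any split root $H=(K,I)$ of $G$, since $K$ is a clique in $H$, every $v\in K$ lies within distance $2$ of every non-isolated vertex of $H$, hence $v$ is $G$-adjacent to every other non-isolated vertex of $G$. Letting $U$ denote the set of such ``universal'' vertices of $G$, we have $K\subseteq U$. One may in fact take $K=U$ without loss of generality: shifting any $u\in U\cap I$ into $K$ preserves $H^2=G$ and cannot create a new $3$-sun, because in a split graph every $3$-sun has its clique part in $K$ and its stable part in $I$, while the moved vertex $u$ has no $H'$-neighbors in $I'=I\setminus\{u\}$ (ruling out $u$ in the clique part) and is adjacent to every other $K'$-vertex (ruling out $u$ in the stable part).

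With $K=U$ identified, every non-isolated $u\in I$ is $G$-adjacent to all of $K$, so adjacencies between $I$ and $K$ in $G$ carry no information about $N_H(u)$. The only remaining constraint is that for distinct $u,u'\in I$, $u\sim_G u'$ iff $N_H(u)\cap N_H(u')\ne\emptyset$. Equivalently, the family $\{N_H(u)\}_{u\in I}$ is an intersection representation of $G[I]$ over the ground set $K$, subject to the $3$-sun-free constraint: no three distinct $u_i\in I$ have $N_H(u_i)=\{v_i,v_{i+1\bmod 3}\}$ for three distinct $v_i\in K$. Without the $3$-sun constraint this would be an edge clique cover problem, and hence NP-hard in general, so the proof must genuinely exploit $3$-sun-freeness.

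The algorithmic core is a guess-and-propagate strategy. For each edge $uu'$ of $G[I]$, there must exist a witness $v\in K$ with $v\in N_H(u)\cap N_H(u')$, giving $\mathcal{O}(nm)$ candidate pairs $(uu',v)$. For each such guess, one completes the $N_H$-assignment by propagating forced constraints: twin classes in $G[I]$ must receive identical neighborhoods; common-neighborhood relations in $G[I]$ force additional memberships; and the $3$-sun-free obstruction rules out otherwise-possible configurations involving degree-$2$ $I$-vertices. A final check that the resulting $H$ satisfies $H^2=G$ and is $3$-sun-free costs $\mathcal{O}(n+m)$ time per guess, yielding the claimed total bound of $\mathcal{O}(nm^2)$.

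The main obstacle is proving completeness of the propagation, namely that if $G$ admits any $3$-sun-free split square root then some initial choice of $(uu',v)$ leads to a successful reconstruction. This requires a rigidity lemma asserting that $3$-sun-free split square roots admit a canonical form determined, up to interchange of equivalent $K$-vertices, by a small number of initial choices. Extending the structural analysis of Lau and Tso for strongly chordal split square roots~\cite{LT11}, the $3$-sun-free hypothesis should eliminate the cyclic configurations that would otherwise create ambiguity among degree-$2$ $I$-vertices, while the additional freedom granted by permitting higher $\ell$-suns is absorbed into the more complex intersection patterns of degree-$\geq 3$ $I$-vertices.
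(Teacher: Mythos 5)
Your reduction to the clique $K=U$ is sound and in fact consistent with the paper's setup (the intersection $\bigcap_{Q\in\mathcal{C}(G)}Q$ used there is precisely the set of universal vertices), but from that point on the argument has two genuine gaps. First, your reformulation of the $3$-sun-freeness constraint is wrong: an induced $3$-sun on $u_1,u_2,u_3\in I$ and $v_1,v_2,v_3\in K$ only requires $N_H(u_i)\cap\{v_1,v_2,v_3\}=\{v_i,v_{i+1}\}$, not $N_H(u_i)=\{v_i,v_{i+1}\}$; forbidding only the latter is strictly weaker, so the constraint you propagate against does not capture the actual hypothesis. Second, and more seriously, the entire correctness of your algorithm rests on the ``rigidity lemma'' asserting that one guessed witness $(uu',v)$ plus propagation recovers a root whenever one exists --- and this lemma is neither proved nor even precisely stated (the propagation rules themselves are left vague: which memberships are ``forced'', how degree-$2$ $I$-vertices are resolved). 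As you yourself note, this is the main obstacle, so the proposal is a plan rather than a proof.

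The paper avoids any search or propagation by proving a characterization: $G$ is the square of a connected $3$-sun-free split graph if and only if $G$ is $(G_1,G_2,G_3,G_4)$-free (equivalently, hereditary clique-Helly, by Prisner's theorem) and $\bigl|\bigcap_{Q\in\mathcal{C}(G)}Q\bigr|\ge|\mathcal{C}(G)|$. The two ingredients you are missing are (i) the fact that in a $3$-sun-free split root $H$ the maximal cliques of $G=H^2$ are exactly the closed neighborhoods of maximal clique-vertices of $H$, which yields the necessary counting condition $|\mathcal{C}(G)|\le|U|$, and (ii) a canonical construction showing sufficiency: assign to each maximal clique $Q_i$ a distinct universal vertex $c_i$ and join every vertex $v\notin U$ to $c_i$ exactly when $v\in Q_i$; the forbidden subgraphs $G_1$--$G_4$ are precisely what guarantees this $H$ is $3$-sun-free. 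This makes the root unique-up-to-trivialities in a much stronger sense than your conjectured rigidity, and reduces the algorithm to listing the (at most $n$) maximal cliques, checking the cardinality condition, and testing hereditary clique-Helly, all within the stated time bound. If you want to salvage your approach, you would at minimum need to correct the $3$-sun constraint and then prove your completeness lemma; the canonical construction above suggests that the true statement is not about propagation from a guessed edge-witness but about the existence of a representative-per-maximal-clique root, which is a different (and simpler) claim.
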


\medskip
Our paper is structured as follows.
In Section~\ref{basic} we collect relevant notations, definitions, and basic facts.
We also give some more background on ptolemaic graphs. 

We prove our main result, Theorem~\ref{result}, in Section~\ref{ptolemaic}.  
The first of the two ingredients of our algorithm is discussed in Section~\ref{forcededges}.
We show how the structure of the maximal cliques of the input graph already determines an essential part of any ptolemaic square root.
The second ingredient we present in Section~\ref{centers}.
We show that for every maximal clique in the square graph, there is some vertex in any ptolemaic square root whose neighborhood spans this clique.
The results of the last section enable us to determine these vertices.
The whole algorithm is put together in Section~\ref{algorithm}, where we also prove its correctness. 

In Section~\ref{split} we first give a structural characterization of squares of $3$-sun-free split graphs in terms of four forbidden induced subgraphs and of the maximal clique structure. 
Then, Theorem~\ref{thm:recogsquareofs3freesplit} will be derived from this characterization.  

We close the paper with a short discussion of our results and propose some questions for further research in Section~\ref{discussion}.

\section{Basic facts and definitions}\label{basic}

All considered graphs are finite and simple.
Let $G$ be a graph and $v \in V(G)$.
By $N_G(v)$ we denote the set of neighbors of $v$ in $G$.
The \emph{closed neighborhood} of $v$ in $G$, that is $N_G(v) \cup \{v\}$, we denote by $N_G[v]$. 
A \emph{clique}, respectively, an \emph{independent set}, in $G$ is a set of pairwise adjacent, respectively, non-adjacent vertices, in $G$. 
For a subset $X \subseteq V(G)$, we denote by $G[X]$ the subgraph induced by $X$.
If two graphs $G$ and $H$ are isomorphic, we may simply write $G \cong H$.

Let $u,v \in V(G)$.
The distance of $u$ and $v$ in $G$ we denote by $\dist_G(u,v)$.
For any $k \ge 1$, $G^k$ denotes the \emph{$k$-th power} of $G$.
That is the graph on $V(G)$ where any two distinct vertices are adjacent if and only if their distance in $G$ is at most $k$.
$G^2$ is called the \emph{square} of $G$ and $G$ is called the \emph{square root} of $G^2$.

Let $u,v$ be two non-adjacent vertices of $G$. 
A subset $S \subseteq V(G)$ is a \emph{$(u,v)$-separator} if $u$ and $v$ belong to different connected components of $G-S$. 
A \emph{separator} is a $(u,v)$-separator for some non-adjacent vertices $u,v \in V(G)$.
We speak of a \emph{minimal separator} if it is not properly contained in another $(u,v)$-separator. 
A \emph{minimal clique separator} is a minimal separator that is a clique.

\medskip
A graph $G$ is called \emph{distance-hereditary} if for all vertices $u,v \in V(G)$ any induced path between $u$ and $v$ is a shortest path.
Distance-hereditary graphs were introduced by Bandelt and Mulder~\cite{BM86}.
It is well-known that this class can be recognized in linear time~\cite{HM90}.

Powers of distance-hereditary graphs have been studied by Bandelt, Henkmann and Nicolai~\cite{BHN95}.
An important subclass of distance-hereditary graphs are the so-called ptolemaic graphs.
A connected graph $G$ is called \emph{ptolemaic} if for every four vertices $u,v,w,x$ the \emph{ptolemaic inequality} holds: \[\dist_G(u,v) \dist_G(w,x) \le \dist_G(u,w) \dist_G(v,x) + \dist_G(u,x) \dist_G(v,w).\] 
We need the following characterization of ptolemaic graphs.
For any graph $H$ we say that $G$ is \emph{$H$-free} if $G$ does not contain an induced subgraph that is isomorphic to $H$. For a positive integer $\ell$, let $P_\ell$ denote the path on $
\ell$ vertices and $\ell-1$ edges, and $C_\ell$ the cycle on $\ell$ vertices and $\ell$ edges. 
A \emph{gem} is the graph displayed in Fig.~\ref{gem-fig}.
A graph is \emph{chordal} if it is $C_\ell$-free for all $\ell\ge 4$. 

\begin{theorem}[Howorka \cite{How81}]\label{characterization-ptolemaic}
For every graph $G$, the following statements are equivalent.
\begin{enumerate}[\em (i)]
	\item $G$ is ptolemaic;
	\item $G$ is gem-free chordal;
	\item $G$ is $C_4$-free distance-hereditary;
	\item for all vertices $u,v \in V(G)$ of distance two, $N_G(u) \cap N_G(v)$ is a minimal clique $u,v$-separator.
\end{enumerate}
\end{theorem}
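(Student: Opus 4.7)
The plan is to establish the equivalence by a cycle of four implications, (i) $\Rightarrow$ (ii) $\Rightarrow$ (iii) $\Rightarrow$ (iv) $\Rightarrow$ (i), each exploiting a different flavor of the characterization.

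For (i) $\Rightarrow$ (ii), the idea is to show that an induced $C_\ell$ ($\ell \ge 4$) or an induced gem would force a violation of the ptolemaic inequality. Concretely, for an induced $C_4$ with vertices $a,b,c,d$ in cyclic order, plugging $u=a$, $v=c$, $w=b$, $x=d$ into the ptolemaic inequality yields $4 \le 2$, a contradiction. For a gem with apex $p$ and induced $P_4$ on $a,b,c,d$, the assignment $u=a$, $v=c$, $w=b$, $x=d$ gives $\dist(a,c)\,\dist(b,d)=4$ while $\dist(a,b)\,\dist(c,d)+\dist(a,d)\,\dist(b,c)=1+2=3$. For induced cycles of length at least five, picking four suitably spaced vertices on the cycle yields an analogous violation.

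For (ii) $\Rightarrow$ (iii), I invoke the forbidden-subgraph characterization of distance-hereditary graphs (no induced house, domino, gem, or hole $C_k$ with $k \ge 5$). A gem-free chordal graph has no hole and no gem by assumption, while houses and dominoes each contain an induced $C_4$ and are therefore excluded by chordality. Chordality directly gives $C_4$-freeness. The step (iii) $\Rightarrow$ (iv) rests on the observation that in a distance-hereditary graph every induced $u,v$-path has length $\dist_G(u,v)$. So when $\dist_G(u,v)=2$, every induced $u,v$-path has length two and therefore passes through $N_G(u)\cap N_G(v)$; since any $u,v$-walk contains an induced $u,v$-path as a subgraph, this makes $N_G(u)\cap N_G(v)$ a separator, and minimality is immediate. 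Two non-adjacent common neighbors would form an induced $C_4$ together with $u$ and $v$, contradicting $C_4$-freeness, so the separator is a clique.

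The main obstacle is to close the cycle with (iv) $\Rightarrow$ (i). The cleanest route is first to upgrade (iv) to (ii) and then to derive the ptolemaic inequality from chordality plus gem-freeness. The first step is quick: an induced $C_\ell$ ($\ell \ge 4$) would produce two vertices at distance two whose sole common neighbor fails to separate them (the rest of the cycle provides a detour), and an induced gem would produce the two endpoints of the $P_4$ at distance two whose only common neighbor is the apex, which likewise does not separate them because the $P_4$ itself is an alternative path. Hence (iv) implies (ii). The remaining (ii) $\Rightarrow$ (i) is the subtle part: I plan to induct on $\dist_G(u,v)+\dist_G(w,x)$, using at each step a minimal clique separator furnished by chordality to split shortest paths between the four vertices and to reduce to smaller configurations, while gem-freeness is used to rule out the arrangements in which the inequality would otherwise be tight or fail. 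The delicate case is when all four vertices lie in a common atom of the clique-separator decomposition, and this is precisely where the gem-free hypothesis must be exploited in full.
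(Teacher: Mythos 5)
Since the paper only cites this theorem (it is Howorka's result, quoted without proof), your proposal has to be judged entirely on its own merits. Several of your implications are sound: (ii)$\Rightarrow$(iii) via the house/hole/domino/gem characterization of distance-hereditary graphs, (iii)$\Rightarrow$(iv), (iv)$\Rightarrow$(ii) (the detour argument along the $P_4$ of the gem, resp.\ along the rest of the cycle, is correct), and the $C_4$, $C_5$ and gem computations in (i)$\Rightarrow$(ii). The first genuine gap is in (i)$\Rightarrow$(ii) for induced cycles of length at least six: the $G$-distances between cycle vertices are not determined by the cycle (outside shortcuts may exist), and ``four suitably spaced vertices'' does not automatically give a violation. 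Concretely, if $v_1,\dots,v_5$ are consecutive on a long induced cycle and the graph metric agrees with the path metric on them ($\dist_G(v_1,v_4)=\dist_G(v_2,v_5)=3$, $\dist_G(v_1,v_5)=4$), then every four-element subset satisfies the ptolemaic inequality with equality (as it must, since paths are ptolemaic), so no ``local'' quadruple works. A correct argument needs more, e.g.\ an induction along the cycle showing that either some shortcut creates a gem/$C_4$-type violating quadruple or the graph metric restricted to the cycle coincides with the cycle metric, after which a near-antipodal quadruple does violate the inequality. That step is asserted, not proved.

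The second and more serious gap is (ii)$\Rightarrow$(i), which is precisely the substantive direction of Howorka's theorem and the one needed to close your cycle of implications: here you offer only a plan (induct on $\dist_G(u,v)+\dist_G(w,x)$, split along minimal clique separators, use gem-freeness in the remaining case), with no step actually carried out. Moreover, the case you single out as delicate is vacuous: in a chordal graph the atoms of the clique-separator decomposition are cliques, so if all four vertices lie in a common atom they are pairwise adjacent and the inequality is the trivial $1\cdot 1\le 1\cdot 1+1\cdot 1$. The real difficulty is the inductive step you pass over, namely controlling how all six pairwise distances decompose when a minimal clique separator separates some of the four vertices, and this is exactly where gem-freeness has to do genuine work; none of that analysis is present. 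As it stands, the proposal establishes (i)$\Rightarrow$(ii) only partially and does not establish (ii)$\Rightarrow$(i) at all, so it does not prove the theorem.
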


\begin{figure}[htb]
\begin{center}
\psset{unit=1.5cm}
\begin{pspicture}(0,0)(3,1.2)

\cnode(0,0.5){0.1cm}{v_1}
\cnode(1,0.2){0.1cm}{v_2}
\cnode(1.5,1){0.1cm}{v_3}
\cnode(2,0.2){0.1cm}{v_4}
\cnode(3,0.5){0.1cm}{v_5}

\ncarc[arcangle=0]{-}{v_1}{v_2}
\ncarc[arcangle=0]{-}{v_2}{v_3}
\ncarc[arcangle=0]{-}{v_3}{v_4}
\ncarc[arcangle=0]{-}{v_4}{v_5}
\ncarc[arcangle=0]{-}{v_3}{v_1}
\ncarc[arcangle=0]{-}{v_3}{v_5}
\ncarc[arcangle=0]{-}{v_2}{v_4}

\end{pspicture}
\end{center}
\caption{the gem}
\label{gem-fig}
\end{figure}

It follows that a ptolemaic graph is always gem-free chordal.
In our proofs we make extensive use of this particular fact without explicitely refering to the above theorem.

A \emph{split graph} is a graph whose vertex set can be partitioned into a clique and an independent set. It is well known that split graphs are exactly the chordal graphs without induced $2K_2$ (the complement of the $4$-cycle $C_4$). 
For more information on graph classes, their definitions and properties we refer to the book by Brandst\"adt, Le and Spinrad~\cite{BLS99}.

\section{Ptolemaic square roots}\label{ptolemaic}

We make use of the following property of squares of ptolemaic graphs later.
  
\begin{theorem}[\cite{BHN95,DD1987,Lub1987,Raych1992}]\label{square-of-ptolemaic-is-chordal}
Squares of ptolemaic graphs are chordal.
\end{theorem}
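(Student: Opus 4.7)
The plan is to argue by contradiction: suppose $G$ is ptolemaic but $G^2$ is not chordal, so $G^2$ contains some induced cycle $C = v_1 v_2 \cdots v_\ell v_1$ with $\ell \ge 4$. Choose such $C$ so that $\ell$ is minimum. Translating $G^2$-adjacencies to distances in $G$, every consecutive pair satisfies $\dist_G(v_i, v_{i+1}) \le 2$, while every non-consecutive pair $v_i, v_j$ on $C$ satisfies $\dist_G(v_i, v_j) \ge 3$.

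For $\ell = 4$, the contradiction is immediate from the defining ptolemaic inequality applied to the quadruple $(u,v,w,x) = (v_1, v_3, v_2, v_4)$:
\[
\dist_G(v_1,v_3)\,\dist_G(v_2,v_4) \;\le\; \dist_G(v_1,v_2)\,\dist_G(v_3,v_4) + \dist_G(v_1,v_4)\,\dist_G(v_2,v_3).
\]
The left-hand side is at least $3\cdot 3 = 9$ (diagonals of the induced $C_4$ in $G^2$), while each of the four factors on the right-hand side is at most $2$, giving a total of at most $2\cdot 2 + 2\cdot 2 = 8$, a contradiction.

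For $\ell \ge 5$, I would exploit the equivalence $\text{ptolemaic} = C_4\text{-free distance-hereditary} = \text{gem-free chordal}$ together with characterization (iv) of Theorem~\ref{characterization-ptolemaic}. For each consecutive pair $v_i v_{i+1}$ with $\dist_G(v_i,v_{i+1}) = 2$, the set $S_i := N_G(v_i)\cap N_G(v_{i+1})$ is a minimal clique $v_i,v_{i+1}$-separator. I would then pick, for each such pair, a common neighbor $c_i \in S_i$ (and set $c_i \in \{v_i,v_{i+1}\}$ if $v_iv_{i+1}\in E(G)$), obtaining a closed walk $W = v_1 c_1 v_2 c_2 \cdots v_\ell c_\ell v_1$ of length at most $2\ell$ in $G$. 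Since $G$ is chordal, the walk $W$ contains no induced cycle of length $\ge 4$, so every subcycle of $W$ must be triangulated; the plan is to use this triangulation, together with the clique separators $S_i$ and the distance constraints $\dist_G(v_i, v_j) \ge 3$ for non-consecutive $i,j$, to locate either (a) a common neighbor in $G$ of two non-consecutive $v_i, v_j$, yielding $\dist_G(v_i,v_j) \le 2$ and hence a chord of $C$ in $G^2$, contradicting that $C$ is induced; (b) an induced $C_4$ or gem among the $v_i$'s and $c_i$'s, contradicting that $G$ is ptolemaic; or (c) a shorter induced cycle of $G^2$, contradicting the minimal choice of $\ell$.

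The main obstacle is precisely this $\ell \ge 5$ case: the ptolemaic inequality is no longer tight enough on its own, and the argument has to combine chordality, the gem exclusion, and the clique-separator property delicately. In particular the bookkeeping is subtle because the chosen common neighbors $c_i$ need not be distinct and the separators $S_i$ can overlap; the key lemma I would aim to prove is that two consecutive separators $S_{i-1}$ and $S_i$ cannot be ``twisted'' around $v_i$ without producing either an induced gem on $\{v_{i-1}, c_{i-1}, v_i, c_i, v_{i+1}\}$ (or a suitable subset) or forcing an edge of $G^2$ between non-consecutive $v_j$'s.
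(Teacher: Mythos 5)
Your $\ell=4$ case is correct: for an induced $C_4$ in $G^2$ the ptolemaic inequality applied to $(v_1,v_3,v_2,v_4)$ gives $9\le\dist_G(v_1,v_3)\dist_G(v_2,v_4)\le\dist_G(v_1,v_2)\dist_G(v_3,v_4)+\dist_G(v_1,v_4)\dist_G(v_2,v_3)\le 8$, a contradiction. But the theorem requires excluding induced cycles of \emph{every} length $\ell\ge 4$ in $G^2$, and your treatment of $\ell\ge 5$ is only a plan, not a proof. Choosing $\ell$ minimum does not reduce to the $C_4$ case (a graph can contain an induced $C_5$ and no induced $C_4$), and the ptolemaic inequality alone cannot finish the job: for five vertices whose pairwise $G$-distances are $\le 2$ on consecutive pairs and lie in $\{3,4\}$ on non-consecutive pairs, every instance of the ptolemaic inequality among these five vertices can be satisfied, so no quadruple yields a contradiction the way it does for $\ell=4$. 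Hence all the work is in the combinatorial step you defer to — the ``key lemma'' about consecutive separators $S_{i-1},S_i$ not being twisted around $v_i$ — and that lemma is stated but never proved; moreover the walk-triangulation bookkeeping you describe (coincident $c_i$'s, overlapping $S_i$'s, possibly adjacent $v_iv_{i+1}\in E(G)$) is exactly where such arguments typically break, so as it stands there is a genuine gap.

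For comparison, the paper does not prove this statement at all: it is quoted from the literature, with the one-line justification that ptolemaic graphs are strongly chordal and squares (indeed all powers) of strongly chordal graphs are strongly chordal, hence chordal (Dahlhaus--Duchet, Lubiw, Raychaudhuri). If you want a self-contained argument, the cleanest way to close your gap is to switch to that route — prove or cite that $G$ admits a strong elimination ordering and that this property is preserved under squaring, which handles all cycle lengths simultaneously — rather than trying to triangulate the closed walk $v_1c_1v_2c_2\cdots v_\ell c_\ell v_1$ by hand; alternatively, an argument in the spirit of the paper's pseudo-$P_5$/gem analysis (its Lemma~\ref{lem:gem}) could be developed, but that again requires a genuine structural lemma in $G$, not just the ptolemaic inequality.
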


It is known that ptolemaic graphs are strongly chordal, and squares of strongly chordal graphs are strongly chordal as well (see \cite{DD1987,Lub1987,Raych1992}).

\subsection{Forced edges in a ptolemaic square root}\label{forcededges}

In this section we show how the structure of the maximal cliques of the square of a ptolemaic graph determines an essential part of any ptolemaic square root.
For this, we need the following concept.

Let us say that a \emph{pseudo-$P_5$} in a graph $H$ is an ordered 5-tupel of distinct vertices $(v_1,v_2,v_3,v_4,v_5)$ such that 
\begin{enumerate}[(i)]
	\item $v_2v_3,v_3v_4 \in E(H)$,
	\item $\mbox{dist}_H(v_1,v_2),\mbox{dist}_H(v_4,v_5) \le 2$,
	\item $\mbox{dist}_H(v_1,v_3) = \mbox{dist}_H(v_2,v_4) = \mbox{dist}_H(v_3,v_5) = 2$,
	\item and $\mbox{dist}_H(v_1,v_4),\mbox{dist}_H(v_1,v_5),\mbox{dist}_H(v_2,v_5) \ge 3$.
\end{enumerate}

In particular, an induced $P_5$ is a pseudo-$P_5$.
Fig.~\ref{Pseudo-P5} shows another way of how a pseudo-$P_5$ may appear.

\begin{figure}[htb]
\begin{center}
\psset{unit=1.5cm}
\begin{pspicture}(0,0)(4,1.5)

\cnode(0,1){0.1cm}{v_1}
\cnode(1,1){0.1cm}{v_2}
\cnode(2,1){0.1cm}{v_3}
\cnode(3,1){0.1cm}{v_4}
\cnode(4,1){0.1cm}{v_5}
\cnode(1,0){0.1cm}{x}
\cnode(3,0){0.1cm}{x'}

\ncarc[arcangle=0]{-}{v_2}{v_3}
\ncarc[arcangle=0]{-}{v_3}{v_4}
\ncarc[arcangle=0]{-}{x}{v_1}
\ncarc[arcangle=0]{-}{x}{v_2}
\ncarc[arcangle=0]{-}{x}{v_3}
\ncarc[arcangle=0]{-}{x'}{v_5}
\ncarc[arcangle=0]{-}{x'}{v_3}
\ncarc[arcangle=0]{-}{x'}{v_4}

\nput{90}{v_1}{$v_1$}
\nput{90}{v_2}{$v_2$}
\nput{90}{v_3}{$v_3$}
\nput{90}{v_4}{$v_4$}
\nput{90}{v_5}{$v_5$}

\end{pspicture}
\end{center}
\caption{$(v_1,v_2,v_3,v_4,v_5)$ form a pseudo-$P_5$.}
\label{Pseudo-P5}
\end{figure}
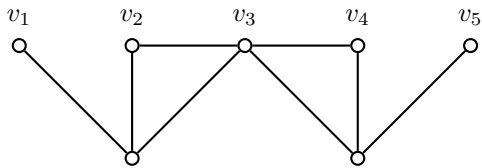

Note that the set $\{v_1,v_2,v_3,v_4,v_5\}$ induces a gem in $H^2$.
In this gem, the sequence $(v_1,v_2,v_4,v_5)$ is an induced $P_4$.
As the next lemma shows, the converse of this statement holds if $H$ is ptolemaic.

\begin{lemma}\label{lem:gem}
Let $H$ be a ptolemaic graph and let $G=H^2$.
If a vertex subset $\{v_1,v_2,v_3,v_4,v_5\}$ induces a gem in $G$ where $(v_1,v_2,v_4,v_5)$ is the induced $P_4$ of this gem, then $(v_1,v_2,v_3,v_4,v_5)$ is a pseudo-$P_5$ in $H$. 
\end{lemma}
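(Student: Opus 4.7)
The plan is to verify each of the four pseudo-$P_5$ conditions (i)--(iv) in turn. Conditions (ii) and (iv), as well as the $\le 2$ halves of the equalities in (iii), follow immediately by translating the edges and non-edges of the given gem in $G = H^2$ into distances in $H$: the gem edges $v_1v_2,v_2v_4,v_4v_5$ and $v_3v_i$ for $i\in\{1,2,4,5\}$ yield $\dist_H\le 2$ for the corresponding pairs, while the non-edges $v_1v_4,v_1v_5,v_2v_5$ give $\dist_H\ge 3$.

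What remains is (i), i.e.\ $v_2v_3,v_3v_4\in E(H)$, together with the equalities $\dist_H(v_1,v_3)=\dist_H(v_2,v_4)=\dist_H(v_3,v_5)=2$ in (iii); equivalently, $v_1v_3,v_2v_4,v_3v_5\notin E(H)$. Two easy observations cut the work. First, at most one of $v_1v_3,v_3v_5$ can lie in $E(H)$, else $\dist_H(v_1,v_5)\le 2$ contradicts $v_1v_5\notin E(G)$. Second, if $v_3v_4\in E(H)$ then $v_1v_3\notin E(H)$ (lest $\dist_H(v_1,v_4)\le 2$), and symmetrically $v_2v_3\in E(H)$ forces $v_3v_5\notin E(H)$. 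Hence, once (i) is proved, the equalities $\dist_H(v_1,v_3)=\dist_H(v_3,v_5)=2$ fall out; only (i) and $v_2v_4\notin E(H)$ remain.

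I would prove these by contradiction, leaning on Howorka's characterization \textbf{(iv)} from Theorem~\ref{characterization-ptolemaic}. Suppose $v_2v_3\notin E(H)$. Then $\dist_H(v_2,v_3)=2$ and $N_H(v_2)\cap N_H(v_3)$ is a minimal clique $v_2v_3$-separator, giving a common neighbor $a$. Distinguishing cases on whether $v_3v_4$, $v_1v_2$ and $v_2v_4$ are edges of $H$ --- and extracting further common neighbors (each living in a clique, by Howorka) whenever a pair is at $\dist_H=2$ --- one traces out the subgraph of $H$ spanned by $v_1,v_2,v_3,v_4$ and these auxiliary vertices. The clique property forces certain chord edges between common neighbors, while $\dist_H(v_1,v_4),\dist_H(v_1,v_5),\dist_H(v_2,v_5)\ge 3$ forbids others; in every case a chordless cycle of length $\ge 4$, or an induced gem, materialises in $H$, contradicting Theorem~\ref{characterization-ptolemaic}(ii). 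A symmetric argument yields $v_3v_4\in E(H)$. Finally, if $v_2v_4\in E(H)$, then $v_1v_2,v_4v_5\notin E(H)$ (lest $\dist_H(v_1,v_4)$ or $\dist_H(v_2,v_5)$ drop to $\le 2$), and the triangle $\{v_2,v_3,v_4\}$ obtained from (i) combines with common neighbors $x\in N_H(v_1)\cap N_H(v_2)$ and $z\in N_H(v_4)\cap N_H(v_5)$ to force, by the same arguments, a forbidden induced subgraph of $H$.

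The main obstacle is the case analysis underpinning the contradiction argument for (i): since the adjacencies between $v_3$ and $v_1,v_2,v_4,v_5$, and between the chosen common neighbors themselves, are not pinned down a priori, careful bookkeeping of which chords must and which cannot occur --- and checking that every completion falls foul of Theorem~\ref{characterization-ptolemaic} --- takes some work. Once done, however, the ptolemaic property of $H$ (gem-freeness combined with chordality) does all the heavy lifting.
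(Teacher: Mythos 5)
Your reduction is fine as far as it goes: conditions (ii), (iv) and the ``$\le 2$'' halves of (iii) are indeed immediate from the gem's edges and non-edges in $G$, and your observation that once (i) holds the edges $v_1v_3,v_3v_5$ are excluded (via $\dist_H(v_1,v_4),\dist_H(v_2,v_5)\ge 3$) is correct. But the remaining content --- $v_2v_3,v_3v_4\in E(H)$ and $v_2v_4\notin E(H)$ --- is the entire substance of the lemma, and at that point your proof stops being a proof: you assert that ``in every case a chordless cycle of length $\ge 4$, or an induced gem, materialises in $H$'' without exhibiting the cases, and that assertion is not even true as stated. In the case $v_2v_4\in E(H)$ the natural configuration is not a forbidden subgraph at all: taking common neighbors $x\in N_H(v_1)\cap N_H(v_2)$ and $z\in N_H(v_4)\cap N_H(v_5)$, the distance constraints and chordality force $v_1,x,v_2,v_4,z,v_5$ to be an induced $P_6$, which a ptolemaic graph may well contain. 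The contradiction there must come from a distance argument: since $H$ is distance-hereditary the induced $P_6$ gives $\dist_H(v_1,v_5)=5$, while $v_1v_3,v_3v_5\in E(G)$ force $\dist_H(v_1,v_5)\le 4$. Your sketch never routes through $v_3$ in this way, so the plan stalls precisely where you claim ``the same arguments'' apply (and depending on your case order, sub-cases such as $xv_3,zv_3\notin E(H)$ also produce no gem and no $C_4$ among the traced vertices).

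A second, structural weakness: you extract common neighbors only of pairs at distance two, so the auxiliary vertices attached to different pairs are a priori unrelated, which is exactly what makes your envisioned bookkeeping explode. The paper first proves a Helly-type ``triangle claim'' (any triangle $x,y,z$ of $G$ has a vertex $u\in N_H[x]\cap N_H[y]\cap N_H[z]$, via Howorka's minimal-clique-separator characterization), and then works with auxiliary vertices $x_{123},x_{234},x_{345}$ each tied to three of the $v_i$ simultaneously; the $C_4$/gem contradictions then come out in a short, controlled case analysis, with the one genuinely different case ($v_2v_4\in E(H)$) killed by the distance-hereditary argument above. To complete your proof you would need both ingredients: a device like the triangle claim to pin down the auxiliary vertices, and the distance-via-$v_3$ argument for the configurations in which no $C_4$ or gem appears.
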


\begin{proof}
We make use of the following observation.

\begin{claim}\label{triangle}
Let $x,y,z \in V(G)$ form a triangle in $G$.
Then there is a vertex $u \in V(H)$ such that $u \in N_H[x] \cap N_H[y] \cap N_H[z]$. 
\end{claim}

If $x,y,z$ form a triangle in $H$, the claim is immediate.
So we may assume that $xy \notin E(H)$.
By Theorem~\ref{characterization-ptolemaic}, $N_H(x) \cap N_H(y)$ is a minimal separator of $x$ and $y$.
Since $\dist_H(x,z), \dist_H(y,z) \le 2$, either $z \in N_H(x) \cap N_H(y)$, or $z$ has a neighbor among $N_H(x) \cap N_H(y)$.
In both cases we found some $u \in V(H)$ such that $u \in N_H[x] \cap N_H[y] \cap N_H[z]$.
This proves Claim~\ref{triangle}.\\

We now prove that $(v_1,v_2,v_3,v_4,v_5)$ is a pseudo-$P_5$ in $H$. 
First suppose that $v_2v_4 \in E(H)$.
As $v_1v_4, v_2v_5 \notin E(G)$, $\dist_H(v_1,v_2) = 2 = \dist_H(v_4,v_5)$.
So let $x_{12},x_{45} \in V(H)$ be such that $v_1x_{12},v_2x_{12} \in E(H)$ and $v_4x_{45},v_5x_{45} \in E(H)$.
As $v_1v_5 \notin E(G)$, $x_{12} \neq x_{45}$.
Moreover, $v_1v_5,v_1v_4,v_2v_5 \notin E(G)$ implies $x_{12} \notin N_H[v_4] \cup N_H[v_5]$ and $x_{45} \notin N_H[v_1] \cup N_H[v_2]$.
By chordality of $H$, $x_{12}x_{45} \notin E(H)$, since otherwise $H[\{v_2,v_4,x_{12},x_{45}\}] \cong C_4$.
Hence, $H[v_1,x_{12},v_2,v_4,x_{45},v_5] \cong P_6$.
As $H$ is distance-hereditary, we obtain $\dist_H(v_1, v_5) = 5$.
But $v_1v_3,v_3v_5 \in E(G)$ and so $\dist_H(v_1,v_5) \le 4$, a contradiction.
Hence, $v_2v_4 \notin E(H)$ and so $\dist_H(v_2,v_4) = 2$.

Next suppose that $v_1v_3 \in E(H)$.
Then $v_1v_4,v_1v_5 \notin E(G)$ implies $v_3v_4,v_3v_5 \notin E(H)$.
By Claim~\ref{triangle} there is a vertex $x_{345}$ such that $x_{345} \in N_H[v_3] \cap N_H[v_4] \cap N_H[v_5]$.
Note that $v_3v_4,v_3v_5 \notin E(H)$, yielding $x_{345} \notin \{v_3,v_4,v_5\}$.
Moreover, $v_2x_{345} \notin E(H)$, since $v_2v_5 \notin E(G)$.

By Claim~\ref{triangle} again, there is a vertex $x_{234}$ with $x_{234} \in N_H[v_2] \cap N_H[v_3] \cap N_H[v_4]$.
As $v_2x_{345} \notin E(H)$, $x_{234} \neq x_{345}$; and $v_2v_4,v_3v_4 \notin E(H)$ implies $x_{234} \notin \{v_3,v_4,v_5\}$.
Since $H$ is chordal, $x_{234}$ and $x_{345}$ must be adjacent in $H$, as otherwise $H[\{v_3, v_4, x_{234}, x_{345}\}] \cong C_4$.

If $v_2v_3 \in E(H)$, $H[\{v_2,v_3,v_4,x_{234},x_{345}\}]\cong \mbox{gem}$, a contradiction.
So $v_2v_3 \notin E(H)$.
By Claim~\ref{triangle}, there is a vertex $x_{123}$ such that $x_{123} \in N_H[v_1] \cap N_H[v_2] \cap N_H[v_3]$.
As $v_2v_3 \notin E(H)$, $x_{123} \notin\{v_2,v_3\}$.
Note that possibly $x_{123}=v_1$.
As $v_1v_4 \notin E(G)$, $x_{234} \notin N_H[v_1]$, and so $x_{123} \neq x_{234}$.

If $x_{123}x_{234} \notin E(H)$, $H[\{v_2,v_3,x_{123},x_{234}\}]\cong C_4$, a contradiction.
So $x_{123}x_{234}$ $\in~E(H)$, which yields $H[\{v_2,v_3,x_{123},x_{234},x_{345}\}]\cong \mbox{gem}$, another contradiction.
Hence, $v_1v_3 \notin E(H)$.
By symmetry, $v_3v_5 \notin E(H)$, too.
Indeed, $\dist_H(v_1,v_3) = \dist_H(v_3,v_5) = 2$.

It remains to show that $v_2v_3,v_3v_4 \in E(H)$.
Suppose for a contradiction that $v_2v_3 \notin E(H)$.
By Claim~\ref{triangle}, there are vertices $x_{123},x_{234},x_{345}$ such that $x_{123} \in N_H[v_1] \cap N_H[v_2] \cap N_H[v_3]$, $x_{234} \in N_H[v_2] \cap N_H[v_3] \cap N_H[v_4]$, and $x_{345} \in N_H[v_3] \cap N_H[v_4] \cap N_H[v_5]$.

As $v_1v_4,v_2v_5 \notin E(G)$, the vertices $x_{123}$, $x_{234}$, $x_{345}$ are pairwise distinct.
Moreover, $v_1v_3$, $v_1v_4$, $v_2v_3$, $v_3v_4$ $\notin E(H)$ implies $x_{123}, x_{234} \notin \{v_1,v_2,v_3,v_4\}$.
Similarly, $x_{345} \notin \{v_1, v_2, v_3, v_5\}$.
Note that possibly $x_{345} = v_4$.
If $x_{123}x_{234} \notin E(H)$, $H[\{v_2,v_3,x_{123},x_{234}\}]\cong C_4$, a contradiction.
Thus $x_{123}x_{234} \in E(H)$.

Suppose that $v_3v_4 \notin E(H)$.
In particular, $x_{345} \neq v_4$.
Thus $x_{234}x_{345} \in E(H)$, since otherwise $H[\{v_3,v_4,x_{234},x_{345}\}]\cong C_4$, a contradiction to the chordality of $H$.
But then $H[\{v_2,v_3,x_{123},x_{234},x_{345}\}]\cong \mbox{gem}$, another contradiction.
Hence $v_3v_4 \in E(H)$.
But this gives $H[\{v_2,v_3,v_4,x_{123},x_{234}\}]\cong \mbox{gem}$, another contradiction.

Therefore $v_2v_3 \in E(H)$, and, by symmetry, $v_3v_4 \in E(H)$.
This completes the proof.
\end{proof}

A more general version of Lemma~\ref{lem:gem} reads as follows.
Let us say that a \emph{gem-triple} is an ordered triple $(A,B,C)$ of distinct maximal cliques such that 
\begin{enumerate}[(a)]
  \item $A \cap C \neq \emptyset$,
	\item $A \cap C \subseteq B$,
	\item $A \cap B \not\subseteq C$, and
	\item $B \cap C \not\subseteq A$.
\end{enumerate}
See Fig.~\ref{fig:gem-triple} for an illustration.

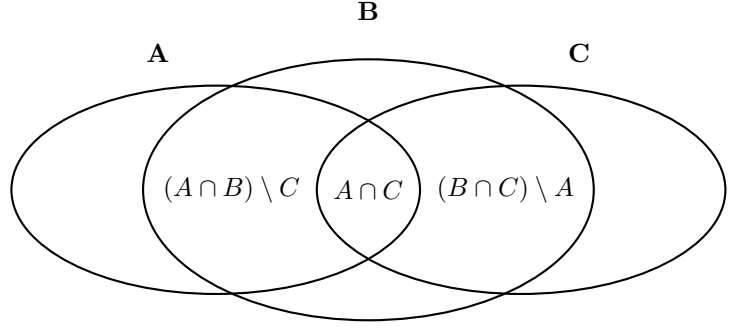
\begin{figure}[h]
\begin{center}
\psset{unit=0.7cm}
\begin{pspicture}(-0.6,-1.2)(10.7,4.4)

\psellipse(2.1,1)(3.9,2)
\psellipse(5,1)(4.3,2.5)
\psellipse(7.9,1)(3.9,2)

\rput(1,3.6){$\mathbf{A}$}
\rput(5,4.4){$\mathbf{B}$}
\rput(9,3.6){$\mathbf{C}$}

\rput(2.4,1){$(A \cap B) \setminus C$}
\rput(5,1){$A \cap C$}
\rput(7.6,1){$(B \cap C) \setminus A$}

\end{pspicture}
\caption{A gem-triple $(A,B,C)$.}
\label{fig:gem-triple}
\end{center}
\end{figure}

\begin{lemma}\label{lem:gem-triple-edges}
Let $H$ be a ptolemaic graph, let $G=H^2$, and let $(A,B,C)$ be a gem-triple in $G$.
Then for all $u \in A \cap C$ and $v \in (A \cup C) \cap B$ with $u \neq v$, $uv \in E(H)$.
\end{lemma}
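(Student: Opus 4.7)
The plan is to apply Lemma~\ref{lem:gem} by exhibiting an induced gem in $G$ whose center is $u$ and in whose induced $P_4$ the vertex $v$ plays the role of $v_2$; property~(i) of the resulting pseudo-$P_5$ in $H$ then yields $uv \in E(H)$ directly. By the symmetry $A \leftrightarrow C$, I may assume $v \in A \cap B$, and I distinguish the cases $v \notin C$ and $v \in C$.

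In the first case, $v \in A \cap B \setminus C$. Using condition~(d) of the gem-triple, fix $c \in (B \cap C) \setminus A$; using that $v \notin C$ and $C$ is a maximal clique, fix $c' \in C$ with $vc' \notin E(G)$, so that necessarily $c' \in C \setminus (A \cup B)$. The crux is to find $a \in A \setminus B$ (then automatically $a \in A \setminus (B \cup C)$, since $A \cap C \subseteq B$) with $ac \notin E(G)$ and $ac' \notin E(G)$. Suppose no such $a$ exists; then every vertex of $A$ is $G$-adjacent to $c$ or to $c'$. Maximality of $A$ with $c \notin A$ produces some $a_1 \in A$ with $a_1 c \notin E(G)$; since $c \in B$, this $a_1$ lies in $A \setminus B$, and then by our supposition $a_1 c' \in E(G)$. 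But then $\{v, a_1, c', c\}$ induces a 4-cycle $v - a_1 - c' - c - v$ in $G$ whose two diagonals $vc'$ and $a_1c$ are non-edges, contradicting Theorem~\ref{square-of-ptolemaic-is-chordal}. Hence the required $a$ exists, and the clique structure of $A, B, C$ together with the three chosen non-edges $ac, ac', vc'$ shows that $\{a, v, u, c, c'\}$ induces a gem in $G$ with $u$ as center and $(a, v, c, c')$ as the induced $P_4$. Lemma~\ref{lem:gem} then produces the pseudo-$P_5$ $(a, v, u, c, c')$ in $H$, whose defining condition~(i) gives $uv \in E(H)$.

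In the second case, $v \in A \cap C$, so $u, v \in A \cap C$; here the construction above can break down, since $v$ may be universal in $G$. I handle this case by invoking the first case (applied to the given gem-triple and to its mirror under $A \leftrightarrow C$) to deduce that every vertex of $(A \cap B \setminus C) \cup (B \cap C \setminus A)$ is $H$-adjacent to both $u$ and $v$. Assuming $uv \notin E(H)$ for contradiction, Theorem~\ref{characterization-ptolemaic}(iv) makes $S = N_H(u) \cap N_H(v)$ a minimal clique $u,v$-separator containing those vertices, so in particular they are pairwise $H$-adjacent. Picking $x \in A \setminus B$ (possible since $A \neq B$), I analyze the $H$-distances from $x$ to $u, v$ and to the vertices in $S$ via Claim~\ref{triangle} from the proof of Lemma~\ref{lem:gem}, and leverage the chordality and gem-freeness of $H$ to extract either a forbidden induced $C_4$ or a forbidden induced gem in $H$; this contradiction forces $uv \in E(H)$. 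The main obstacle is precisely this second case, since the efficient induced-$C_4$-in-$G$ trick of the first case relies on the existence of a $G$-non-neighbor of $v$, which is unavailable when $v$ is universal.
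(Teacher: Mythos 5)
Your first case is correct and is essentially the paper's own argument: build an induced gem in $G$ whose center is $u$ and whose induced $P_4$ has $v$ in the second position, then invoke Lemma~\ref{lem:gem} and read off $uv\in E(H)$ from condition~(i) of the pseudo-$P_5$. The choice of $a$, $c$, $c'$ and the chordality-of-$G$ argument for their pairwise non-adjacencies are sound.

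The second case ($u,v\in A\cap C$), however, is not proved. After correctly observing that every vertex of $(A\cap B\setminus C)\cup(B\cap C\setminus A)$ is $H$-adjacent to both $u$ and $v$ and that, under the assumption $uv\notin E(H)$, the set $S=N_H(u)\cap N_H(v)$ is a minimal clique separator containing these vertices, you only announce a plan: pick some $x\in A\setminus B$, ``analyze the $H$-distances \ldots{} and leverage the chordality and gem-freeness of $H$ to extract either a forbidden induced $C_4$ or a forbidden induced gem.'' No such structure is exhibited, and it is not clear what the detour through $x$ and Claim~\ref{triangle} would produce; as written, the contradiction is asserted, not derived. This is a genuine gap --- though one you could close with material you already have: fix $v_2\in(A\cap B)\setminus C$ and $v_4\in(B\cap C)\setminus A$ (nonempty by conditions (c) and (d)). Your first case gives $uv_2,uv_4,vv_2,vv_4\in E(H)$, and the pseudo-$P_5$ produced there yields, via condition~(iii), $\dist_H(v_2,v_4)=2$, i.e.\ $v_2v_4\notin E(H)$. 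Then $v_2,v_4\in S$ contradicts $S$ being a clique; equivalently, $u\,v_2\,v\,v_4$ would be an induced $C_4$ in $H$, contradicting chordality. This last step is exactly how the paper finishes: it runs one gem construction for arbitrary $v_2\in(A\cap B)\setminus C$, $v_3\in A\cap C$, $v_4\in(B\cap C)\setminus A$, and then notes that two vertices of $A\cap C$ both form induced paths $(v_2,\cdot,v_4)$ in $H$, so chordality of $H$ forces them to be adjacent, making $A\cap C$ a clique in $H$.
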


\begin{proof}
Let $v_2 \in (A \cap B) \setminus C$, $v_3 \in A \cap C$, $v_4 \in (B \cap C) \setminus A$.
Due to the maximality of $A$ and $C$, there are vertices $v_1 \in A \setminus (B \cup C)$ and $v_5 \in C \setminus (A \cup B)$ with $v_1 v_4, v_2 v_5 \notin E(G)$.
Otherwise, $A \cup \{v_4\}$ resp.~$C\cup\{v_2\}$ would be a clique, a contradiction.

By Theorem~\ref{square-of-ptolemaic-is-chordal}, $G$ is chordal. 
Thus, $v_1$ and $v_5$ must be non-adjacent in $G$, as otherwise $G[\{v_1,v_2,v_4,v_5\}] \cong C_4$.
Hence, $G[\{v_1,v_2,v_3,v_4,v_5\}]\cong \mbox{gem}$.
Lemma~\ref{lem:gem} implies that $(v_1,v_2,v_3,v_4,v_5)$ is a pseudo-$P_5$ in $H$.
As $v_2,v_3,v_4$ were arbitrary, every edge between $(A \cap B) \setminus C$, resp.~$(B \cap C) \setminus A$, and $A\cap C$ is present in $H$.

Moreover, if for some  $v_3' \neq v_3$ it holds that both $(v_2,v_3,v_4)$ and $(v_2,v_3',v_4)$ are induced $P_3$ in $H$, the chordality of $H$ implies $v_3v_3' \in E(H)$.
This means that $A \cap C$ is a clique in $H$, completing the proof. 
\end{proof}

\subsection{Centers of maximal cliques}\label{centers}

The last section shows that several edges of any ptolemaic square root of a graph are forced.
However, a non-trivial degree of freedom remains for the choice of the other edges of the square root.
This issue is dealt with in the present section.

Let $H$ be any ptolemaic graph and let $G=H^2$.
Let $C$ be a maximal clique of $G$ and $x \in V(H)$ with $N_H[x] = C$.
We call $x$ a \emph{center} of $C$ (with respect to $H$).
As Lemma~\ref{lem:centersexist} below shows, a center exists for every maximal clique $G$.
Note that for every vertex $v \in V(G)$ being in a maximal clique $C$ of $G$ is equivalent to being identical or adjacent to a center of $C$ in $H$.
In the remainder of this paper, we make use of this fact without explicitly repeating it.

Next we prove a sequence of lemmas in order to prepare our algorithm.

\begin{lemma}\label{lem:centersexist}
Let $H$ be a ptolemaic graph and let $G=H^2$.
Every maximal clique of $G$ has a center in $H$.
\end{lemma}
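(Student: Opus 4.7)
My plan is to reduce the lemma to the following Helly-type statement: for every clique $C$ of $G = H^2$, the intersection $\bigcap_{v \in C} N_H[v]$ is non-empty. Given any vertex $x$ in this intersection, $C \subseteq N_H[x]$, and since $N_H[x]$ is always a clique in $G = H^2$ (any two $H$-neighbors of $x$ lie at $H$-distance at most two), the maximality of $C$ forces $N_H[x] = C$. Thus $x$ is a center of $C$, which is what we want.

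To establish the Helly statement, I would first observe that for any two $u, v \in C$ we have $\dist_H(u,v) \le 2$, so $N_H[u] \cap N_H[v] \ne \emptyset$. Hence the family $\{N_H[v]\}_{v \in C}$ is pairwise intersecting. I would then invoke the fact that $H$, being ptolemaic, is strongly chordal (as noted right after Theorem~\ref{square-of-ptolemaic-is-chordal}), together with the classical theorem of Farber that the closed neighborhoods of a strongly chordal graph form a Helly family. This immediately yields $\bigcap_{v \in C} N_H[v] \ne \emptyset$.

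If a self-contained argument is preferred over citing Farber, I would instead prove the Helly statement by induction on $|C|$. The cases $|C| \le 2$ are immediate, and $|C| = 3$ is exactly Claim~\ref{triangle} used inside the proof of Lemma~\ref{lem:gem}. For the inductive step with $|C| = k \ge 4$, I would apply the inductive hypothesis to $C \setminus \{v_1\}$ and $C \setminus \{v_k\}$ to obtain witnesses $y$ and $z$; if neither is adjacent to (or equal to) the missing vertex, then $y$ and $z$ are two distinct common $H$-neighbors of $v_2, \dots, v_{k-1}$, and a case analysis on $H[\{v_1, v_2, v_{k-1}, v_k, y, z\}]$ based on chordality, gem-freeness, and Theorem~\ref{characterization-ptolemaic}(iv) would either merge the witnesses into a common element of $\bigcap_{v \in C} N_H[v]$ or produce a forbidden induced $C_4$ or gem in $H$. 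The hard part is this case analysis, which mirrors the one in the proof of Lemma~\ref{lem:gem} and is technically involved but structurally routine.
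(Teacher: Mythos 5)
Your main argument is correct, but it takes a genuinely different route from the paper. The paper proves the lemma from scratch: it picks a vertex $v$ of the maximal clique $Q$ with $N_H[v]\cap Q$ inclusion-maximal and, assuming some $x\in Q\setminus N_H[v]$ exists, uses Theorem~\ref{characterization-ptolemaic}(iv) (the set $N_H(v)\cap N_H(x)$ is a minimal clique separator) to exhibit a vertex $s$ of that separator lying in $Q$ with $N_H[v]\cap Q$ properly contained in $N_H[s]\cap Q$, contradicting the choice of $v$. You instead reduce the lemma to the neighborhood-Helly property: the sets $N_H[v]$, $v\in C$, are pairwise intersecting since adjacency in $G=H^2$ means $H$-distance at most two, and since ptolemaic graphs are strongly chordal, their closed neighborhoods form a Helly family; any common vertex $x$ then satisfies $C\subseteq N_H[x]$, and maximality of $C$ plus the fact that $N_H[x]$ is a clique of $G$ gives $N_H[x]=C$. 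This reduction is sound, and the imported fact is true and classical (it follows from Farber's characterization of strongly chordal graphs via totally balanced closed-neighborhood matrices together with the Helly property of balanced hypergraphs, or equivalently from the hypertree structure of neighborhoods in dually chordal graphs), so the citation should be phrased with that provenance rather than as a verbatim theorem of Farber. What your approach buys is brevity and a slightly stronger conclusion (every clique of $G$, maximal or not, lies in some closed $H$-neighborhood); what the paper's approach buys is self-containedness, using only Theorem~\ref{characterization-ptolemaic}, which matters since the rest of the paper never invokes hypergraph Helly machinery. Your fallback inductive argument, by contrast, is only a sketch: the inductive step is precisely the hard content, and "merge the witnesses or find a forbidden $C_4$ or gem" is asserted rather than carried out, so on its own it would not constitute a proof; the proposal stands only via the cited Helly theorem.
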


\begin{proof}
Let $Q$ be a maximal clique in $G$, and let $v\in Q$ such that $N_H[v]\cap Q$ is inclusion-maximal. We are going to show that $N_H[v]=Q$. Note that $N_H[v]$ is a clique in $G$. So, by the maximality of the clique $Q$, it suffices to show that $Q\subseteq N_H[v]$.

Assume, by way of contradiction, there is a vertex $x\in Q-N_H[v]$. Then $\dist_H(v,x)=2$, and, by Theorem~\ref{characterization-ptolemaic}, 
\[
 \mbox{$S = N_H(v)\cap N_H(x)$ is a minimal clique $(v,x)$-separator in $H$.}
\]
Let $A$ be the component of $G-S$ containing $v$. 

\begin{claim}\label{claim:centersexist1}
For every vertex $y\in Q\cap A$, $N_H(y)\cap S=S$.
\end{claim}

If $y\in Q\cap A$, then $\dist_H(x,y)=2$. Hence by Theorem~\ref{characterization-ptolemaic}, $S'=N_H(x)\cap N_H(y)\subseteq S$ is an $(x,y)$-separator in $H$. Therefore, $S'=S$, otherwise there would be a path in $H$ connecting $x$ and $y$ using $v$ and a vertex in $S-S'$. Claim~\ref{claim:centersexist1} follows.

\begin{claim}\label{claim:centersexist2}
For every vertex $s\in S$ and every vertex $q\in Q$, $\dist_H(s,q)\le 2$.
\end{claim}

If $q\in S$, the claim follows from the fact that $S$ is a clique in $H$. If $q\in A$, the claim follows from Claim~\ref{claim:centersexist1}. Let $q\in B$, where $B$ is another component of $G-S$. Since $\dist_H(q,v)=2$, $q$ must have some neighbor in $S$. Since $S$ is a clique, $\dist_H(s,q)\le 2$. Claim~\ref{claim:centersexist2} follows.\\

Consider now a vertex $s\in S$. The maximality of the clique $Q$ in $G$ and Claim~\ref{claim:centersexist2} imply that $s$ must belong to $Q$. On the other hand, by Claim~\ref{claim:centersexist1}, $N_H[v]\cap Q\subseteq N_H[s]\cap Q$, and this inclusion is proper because $sx\in E(H)$ but $vx\not\in E(H)$. This contradicts the choice of $v$. Hence $N_H[v]=Q$ as claimed.
\end{proof}

\begin{lemma}\label{domination}
Let $H$ be a ptolemaic graph and let $G=H^2$.
For every vertex $u \in V(H)$, there is a center $x$ with $N_{H}[u] \subseteq N_{H}[x]$.
\end{lemma}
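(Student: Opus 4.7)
My plan is to reduce this statement to a direct application of Lemma~\ref{lem:centersexist}, so there is essentially no obstacle at all once one observes the right elementary fact.

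First I would note that $N_H[u]$ is a clique in $G = H^2$. Indeed, any two vertices $a, b \in N_H[u]$ are both in $N_H[u]$, so they are joined in $H$ by a path of length at most two through $u$; hence $\dist_H(a,b) \le 2$, which means $ab \in E(G)$ (or $a = b$). So $N_H[u]$ is a clique in $G$.

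Next I would extend this clique to a maximal clique $Q$ of $G$, i.e. pick any maximal clique $Q$ of $G$ with $N_H[u] \subseteq Q$. By Lemma~\ref{lem:centersexist}, $Q$ admits a center $x \in V(H)$, meaning $N_H[x] = Q$. Therefore
\[
N_H[u] \;\subseteq\; Q \;=\; N_H[x],
\]
which is exactly the desired conclusion. In particular, this forces $u \in N_H[x]$, so $u$ is either equal or adjacent to the chosen center $x$ in $H$.

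The main observation doing the work is that closed neighborhoods in $H$ are automatically cliques in the square $G$; once this is in hand, Lemma~\ref{lem:centersexist} does everything else and there is no further case analysis or appeal to ptolemaic structure needed for this particular statement (the ptolemaic hypothesis is already consumed inside Lemma~\ref{lem:centersexist}). I do not foresee any genuine obstacle, so the proof in the paper is likely to be just these two or three lines.
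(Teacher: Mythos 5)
Your proposal is correct and follows exactly the paper's argument: observe that $N_H[u]$ is a clique in $G$, extend it to a maximal clique, and apply Lemma~\ref{lem:centersexist} to obtain the center. No differences worth noting.
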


\begin{proof}
As $N_H[u]$ is a clique in $G$, there must be some maximal clique $C$ of $G$ with $N_H[u] \subseteq C$.
Let $x$ be a center of $C$.
Then $N_{H}[u] \subseteq C = N_{H}[x]$.
\end{proof}

Note that Lemma~\ref{domination} implies that the centers of $G$ are exactly the vertices of $H$ with inclusionwise maximal closed neighborhood.

\begin{lemma}\label{P3center}
Let $H$ be a ptolemaic graph and let $G=H^2$.
Let $x,y \in V(H)$ be two centers with $\dist_{H}(x,y) = 2$.
Then there is a center among $N_{H}(x) \cap N_{H}(y)$.
Moreover, for any center $z \in N_{H}(x) \cap N_{H}(y)$, $(N_H[x],N_H[z],N_H[y])$ is a gem-triple.
\end{lemma}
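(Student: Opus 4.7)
The plan is to treat the two assertions separately, leveraging Theorem~\ref{characterization-ptolemaic}(iv) and Lemma~\ref{domination}. Since $\dist_H(x,y) = 2$, the vertices $x$ and $y$ are non-adjacent in $H$, and Theorem~\ref{characterization-ptolemaic}(iv) tells us that $S := N_H(x) \cap N_H(y)$ is a non-empty minimal clique $(x,y)$-separator in $H$. This gives us a clique in $H$ sitting between $x$ and $y$ to work with.

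For the existence of a center in $S$, I would pick any $s \in S$ and apply Lemma~\ref{domination} to obtain a center $z$ with $N_H[s] \subseteq N_H[z]$. Because $s$ is adjacent in $H$ to both $x$ and $y$, we have $\{x,y\} \subseteq N_H[z]$. It cannot happen that $z = x$, for then $y$ would be a neighbor of $x$ in $H$, contradicting $\dist_H(x,y) = 2$; by symmetry $z \neq y$. Hence $z$ is adjacent in $H$ to both $x$ and $y$, i.e., $z \in S$, and by construction $z$ is a center.

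For the moreover part, set $A := N_H[x]$, $B := N_H[z]$, $C := N_H[y]$. These are maximal cliques of $G$ (as $x, z, y$ are centers), and a quick check shows they are pairwise distinct, e.g., $y \in B \cap C$ but $y \notin A$. I would verify the four gem-triple conditions in order. Condition (a) is immediate since $z \in A \cap C$. For (b), any $w \in A \cap C$ either equals $x$ or $y$, both of which lie in $B$ because $zx, zy \in E(H)$; otherwise $w \in S$, and since $S$ is a clique containing $z$ we get $wz \in E(H)$, hence $w \in B$. For (c), $x \in (A \cap B) \setminus C$ because $xy \notin E(H)$, and (d) follows symmetrically with $y \in (B \cap C) \setminus A$.

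The main obstacle, modest as it is, is the existence step: one must ensure that the center produced by Lemma~\ref{domination} does not degenerate to $x$ or $y$. Ruling this out is exactly where the hypothesis $\dist_H(x,y)=2$ enters; once this is done, the gem-triple verification is a direct bookkeeping from the clique structure of $S$ together with $xy \notin E(H)$.
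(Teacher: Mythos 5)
Your proposal is correct and follows essentially the same route as the paper: apply Lemma~\ref{domination} to a common neighbor of $x$ and $y$ to obtain a center in $N_H(x)\cap N_H(y)$, then verify the gem-triple conditions using that $N_H(x)\cap N_H(y)$ is a clique (Theorem~\ref{characterization-ptolemaic}) together with $zx,zy\in E(H)$ and $xy\notin E(H)$. You are in fact slightly more explicit than the paper in ruling out $z\in\{x,y\}$ in the existence step; just make sure the ``moreover'' paragraph is phrased for an arbitrary center $z\in N_H(x)\cap N_H(y)$ rather than the particular one just constructed (your argument already only uses properties every such $z$ has).
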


\begin{proof}
As $\dist_{H}(x,y) = 2$, there is some vertex $v \in N_{H}(x) \cap N_{H}(y)$.
By Lemma~\ref{domination}, there is some center $z$ with $N_H[v] \subseteq N_H[z]$.
In particular, $z \in N_{H}(x) \cap N_{H}(y)$, which proves the first part of our claim.

Choose any center $z \in N_{H}(x) \cap N_{H}(y)$.
As $x,y,z$ are centers, $N_H[x]$, $N_H[y]$, and $N_H[z]$ are maximal cliques of $G$ by definition. 

Note that, by Theorem~\ref{characterization-ptolemaic}, $N_H(x)\cap N_H(y)$ is a clique, hence $N_H[x] \cap N_H[y] \subseteq N_H[z]$. 
As $x \in (N_H[x] \cap N_H[z]) \setminus N_H[y]$ and $y \in (N_H[z] \cap N_H[y]) \setminus N_H[x]$, $(N_H[x],N_H[z],N_H[y])$ is a gem-triple.
\end{proof}

Our next lemma enables a key step of our algorithm.
It allows to determine the centers of the maximal cliques of $G$, up to being adjacent twins in $G$. Here, two vertices are \emph{twins} if they have the same neighbors. 

\begin{lemma}\label{lem:center-vs-triple}
Let $H$ be a ptolemaic graph and let $G=H^2$.
Let $A,C$ be two maximal cliques of $G$ with $A \cap C \neq \emptyset$, and let $v$ be a center of $A$.
Then $v \in A \setminus C$ if and only if there is a maximal clique $B$ such that $(A,B,C)$ is a gem-triple.
\end{lemma}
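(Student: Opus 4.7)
The plan is to treat the two implications separately, both times leveraging Lemmas~\ref{lem:centersexist}, \ref{P3center}, and \ref{lem:gem-triple-edges} that were established earlier.

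For the forward direction, I would start by applying Lemma~\ref{lem:centersexist} to obtain a center $y$ of $C$. A brief case analysis then shows $y\in C\setminus A$: if $y\in A$, then since $v$ is a center of $A$ we have either $y=v$ (yielding $A=N_H[v]=N_H[y]=C$, contradicting $v\in A\setminus C$) or $vy\in E(H)$ (yielding $v\in N_H[y]=C$, again a contradiction). Next I would verify $\dist_H(v,y)=2$: the distance is at least two since $v\notin N_H[y]$; for the upper bound, any vertex $u\in A\cap C$ is a common neighbor of $v$ and $y$ in $H$, because $u\in N_H[v]=A$ with $u\neq v$ (as $v\notin C$) and $u\in N_H[y]=C$ with $u\neq y$ (as $y\notin A$). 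Having two centers at $H$-distance two, Lemma~\ref{P3center} directly produces a center $z\in N_H(v)\cap N_H(y)$ for which $(A, N_H[z], C)$ is a gem-triple, so setting $B:=N_H[z]$ finishes this direction.

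For the backward direction, I would argue by contradiction: suppose $v\in C$. Then $v\in A\cap C\subseteq B$ by gem-triple condition~(b). By condition~(d) there exists some $q\in(B\cap C)\setminus A$, and of course $q\neq v$ since $v\in A$. As $v\in A\cap C$ and $q\in(A\cup C)\cap B$, Lemma~\ref{lem:gem-triple-edges} forces $vq\in E(H)$; but this places $q$ in $N_H[v]=A$, contradicting $q\notin A$. Hence $v\in A\setminus C$, as required.

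The main potential obstacle I foresee is essentially bookkeeping. In the forward direction one has to confirm that the chosen center $y$ of $C$ actually lies in $C\setminus A$ and is at $H$-distance exactly two from $v$; in the backward direction one must correctly verify the membership hypothesis of Lemma~\ref{lem:gem-triple-edges}. Both of these reduce to one-line arguments from the definition of a center and the conditions of a gem-triple, so I do not expect serious difficulty beyond careful organisation.
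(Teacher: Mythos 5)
Your proof is correct and follows essentially the same route as the paper: the backward direction uses a vertex of $(B\cap C)\setminus A$ together with Lemma~\ref{lem:gem-triple-edges} to contradict $N_H[v]=A$, and the forward direction takes a center of $C$, shows it lies in $C\setminus A$, and invokes Lemma~\ref{P3center}. Your only addition is the explicit verification that the two centers are at $H$-distance exactly two (via a vertex of $A\cap C$), a detail the paper leaves implicit.
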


\begin{proof}
First assume that there is a maximal clique $B$ such that $(A,B,C)$ is a gem-triple.
Then, by definition, there is a vertex $w \in B \cap (C \setminus A)$.
Suppose that $v \in A \cap C$.
By Lemma~\ref{lem:gem-triple-edges}, $vw \in E(H)$, which means that $N_H[v] \not\subseteq A$, a contradiction to the choice of $v$.

Now assume that $v \in A \setminus C$.
Let $w$ be a center of $C$.
If $w \in A \cap C$, $vw \in E(H)$ since $N_H[v] = A$.
But this gives $N_H[w] \not\subseteq C$, a contradiction to the choice of $w$.
So, $w \in C \setminus A$.
By Lemma~\ref{P3center}, there is a maximal clique $B$ such that $(A,B,C)$ is a gem-triple, and this completes the proof.
\end{proof}

\subsection{The Algorithm}\label{algorithm}

We now state our algorithm and then discuss its logic.
Let $G$ be the input graph.

\begin{enumerate}
	\item\label{chordalcheck} Check whether $G$ is chordal. If not, return that $G$ does not have a ptolemaic square root.

	\item Compute the maximal cliques of $G$. Let the set of maximal cliques be denoted $\mathcal C$.

	\item Initialize the empty graph $H$ on the vertex set $V(G)$.
	
	\item\label{gem-triple-computation} Determine the gem-triples among $\mathcal C$.
	
	\item\label{gem-stamps-step} For every gem-triple $(A,B,C)$ in $G$, add the edge $uv$ to $E(H)$, for all $u \in A \cap C$ and $v \in (A \cup C) \cap B$ with $u \neq v$.	
	
	\item\label{centerlocation} Peform the following steps for every $A \in \mathcal C$.				
				\begin{enumerate}[(i)]
					\item Compute the set $\mathcal C_A$ of maximal cliques $C \in \mathcal C$ with $A \cap C \neq \emptyset$.
					\item Compute the set $\mathcal C_A'$ of maximal cliques $C \in \mathcal C$	for which there is a $B \in \mathcal C$ such that $(A,B,C)$ is a gem-triple.
					\item Compute the set $\mathcal C_A'' = \mathcal C_A \setminus \mathcal C_A'$.
					\item Compute the vertex set $X_A = \bigcap \mathcal C_A'' \setminus \bigcup \mathcal C_A'$.
				\end{enumerate}
			
	\item\label{assigningcenters} Assign a vertex $x_C \in X_C$ to every $C \in \mathcal C$ in an injective way. If this is not possible, return that $G$ does not have a ptolemaic square root.

	\item\label{placingcenteredges} For every $C \in \mathcal C$ and $v \in C$: if $v \neq x_C$ and $vx_C \notin E(H)$, add the edge $vx_C$ to $E(H)$.
	
		\item\label{crucialstep} Check whether the graph $H$ is a ptolemaic square root of $G$. If yes, return $H$. If not, return that $G$ does not have a ptolemaic square root.
\end{enumerate}

\noindent
In Step~\ref{gem-stamps-step} the forced egdes are included into the potential square root $H$ according to Lemma~\ref{lem:gem-triple-edges}.
Potential centers for $H$ are determined in Steps~\ref{centerlocation} and~\ref{assigningcenters} according to Lemma~\ref{lem:center-vs-triple}.
Step~\ref{placingcenteredges} implements the neighborhoods of these centers.

\begin{lemma}\label{runtime}
The algorithm can be implemented such that it terminates in $\mathcal O(n^4)$ time when applied to an $n$-vertex graph.
\end{lemma}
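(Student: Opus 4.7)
The plan is to analyze each step of the algorithm separately and show that the total running time is $\mathcal O(n^4)$. We use throughout that a chordal $n$-vertex graph has at most $n$ maximal cliques, all of which can be enumerated in $\mathcal O(n+m)=\mathcal O(n^2)$ time from a perfect elimination ordering.

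The first three steps run in $\mathcal O(n^2)$ time by linear-time chordality recognition and linear-time maximal-clique enumeration in chordal graphs. Steps~\ref{gem-triple-computation} and~\ref{gem-stamps-step} are the main obstacle: a naive enumeration over the $\mathcal O(n^3)$ ordered triples of maximal cliques, adding up to $\mathcal O(n^2)$ edges per triple, would give $\Theta(n^5)$, which is too weak. Instead, I would iterate over the $\mathcal O(n^2)$ ordered pairs $(A,C)$ of maximal cliques; for each such pair, I would loop over the at most $n$ candidate middle cliques $B$ and test in $\mathcal O(n)$ time whether $(A,B,C)$ is a gem-triple, using precomputed characteristic vectors of the maximal cliques. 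In the inner loop I would maintain the running union $U_{A,C}:=\bigcup_B (A\cup C)\cap B$ over qualifying $B$, and then, once per pair $(A,C)$, add every edge between $A\cap C$ and $U_{A,C}$ to $H$, using the adjacency matrix of $H$ to insert each edge in $\mathcal O(1)$. This yields $\mathcal O(n^2)$ pairs times $\mathcal O(n^2)$ work per pair, i.e.\ $\mathcal O(n^4)$ in total, and this is the dominant cost of the whole algorithm.

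For Step~\ref{centerlocation}, the sets $\mathcal C_A,\mathcal C'_A,\mathcal C''_A$, and $X_A$ for each of the $\mathcal O(n)$ maximal cliques $A$ reduce to $\mathcal O(n)$ set operations on subsets of $V(G)$ of size $\mathcal O(n)$, costing $\mathcal O(n^2)$ per clique and $\mathcal O(n^3)$ in total. Step~\ref{assigningcenters} is a bipartite matching problem: match each $C\in\mathcal C$ to some $x_C\in X_C$ injectively. The underlying bipartite graph has $\mathcal O(n)$ vertices on each side and $\mathcal O(n^2)$ edges, so Hopcroft--Karp finds a maximum matching in $\mathcal O(n^{5/2})$ time. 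Step~\ref{placingcenteredges} processes $\sum_{C\in\mathcal C} |C|=\mathcal O(n^2)$ pairs in constant time each.

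The only remaining cost is the verification in Step~\ref{crucialstep}. Ptolemaicity of $H$ can be decided via characterization~(iii) of Theorem~\ref{characterization-ptolemaic}: distance-hereditariness is testable in $\mathcal O(n^2)$ time, and $C_4$-freeness in $\mathcal O(n^3)$ time by checking, for every non-adjacent pair $u,v$ in $H$, whether $|N_H(u)\cap N_H(v)|\le 1$. Comparing $H^2$ with $G$ amounts to a depth-two BFS from every vertex of $H$, again $\mathcal O(n^3)$. Hence Step~\ref{crucialstep} runs in $\mathcal O(n^3)$, and the overall running time is dominated by Steps~\ref{gem-triple-computation} and~\ref{gem-stamps-step} at $\mathcal O(n^4)$, which is the main obstacle to any faster version of the algorithm.
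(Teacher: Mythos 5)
Your treatment of the dominant steps is essentially the paper's: you avoid the naive $\Theta(n^5)$ bound for Steps~\ref{gem-triple-computation} and~\ref{gem-stamps-step} by aggregating, per ordered pair $(A,C)$, the union of $(A\cup C)\cap B$ over all qualifying middle cliques $B$ and inserting the forced edges once per pair, which is exactly the paper's $X_{A,C}$ device; your Step~\ref{assigningcenters} via Hopcroft--Karp is a harmless deviation (the paper instead observes that each $X_C$ is a maximal class of adjacent twins, so the sets $X_C$ are pairwise identical or disjoint and a greedy assignment works in $\mathcal O(n^2)$), and both fit the budget.

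There is, however, a genuine error in your implementation of Step~\ref{crucialstep}. You propose to test ptolemaicity via characterization~(iii) of Theorem~\ref{characterization-ptolemaic} and to certify $C_4$-freeness by checking that every non-adjacent pair $u,v$ of $H$ satisfies $|N_H(u)\cap N_H(v)|\le 1$. This condition is strictly stronger than $C_4$-freeness: two common neighbours $x,y$ of a non-adjacent pair yield an induced $C_4$ only when $xy\notin E(H)$; if $xy\in E(H)$ the four vertices induce a diamond, which is chordal, gem-free, hence ptolemaic. So your test rejects every $H$ containing an induced diamond, and such $H$ arise routinely (already the edges forced by a single gem-triple together with a center create non-adjacent vertices with two adjacent common neighbours). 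Consequently the proposed implementation of Step~\ref{crucialstep} does not implement the algorithm's check ``is $H$ a ptolemaic square root of $G$'' and would cause false negatives. The fix is easy and stays within the time bound: test ptolemaicity as chordality plus distance-hereditariness, both recognizable in linear time (this is what the paper does), or, if you insist on testing $C_4$-freeness directly, check for each non-adjacent pair whether $N_H(u)\cap N_H(v)$ is a clique, which costs $\mathcal O(n^2)$ per pair and $\mathcal O(n^4)$ overall. With that repair your argument gives the claimed $\mathcal O(n^4)$ bound.
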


\begin{proof}
It is well known that chordal graphs can be recognized in linear time.
Moreover, an $n$-vertex chordal graph has at most $n$ maximal cliques, and these maximal cliques can be computed in linear time (cf.~\cite{Gol04}).

In order to compute all gem-triples of $G$, we first determine the intersection of each pair of maximal cliques of $G$.
As $|\mathcal C| = \mathcal O(n)$, this is done in $\mathcal O (n^3)$ time.
We can now check for all $A,B,C \in \mathcal C$ the four conditions that define a gem-triple in $\mathcal O(n)$ time.
Thus, Step~\ref{gem-triple-computation} can be performed in $\mathcal O (n^4)$ time.

In Step~\ref{gem-stamps-step} for every gem-triple $(A,B,C)$ of $G$ we have to add the edge $uv$ to $E(H)$, for all $u \in A \cap C$ and $v \in (A \cup C) \cap B$ with $u \neq v$.
This can be done in $\mathcal O(n^4)$ time as follows.
For every pair of maximal cliques $A,C \in \mathcal{C}$ we initialize an empty set $X_{A,C}$.
For every gem-triple $(A,B,C)$, we set \[X_{A,C} \leftarrow X_{A,C} \cup (A \cap B) \cup (B \cap C).\]
Then we add all new edges between vertices of $X_{A,C}$ and $A \cap C$ to $H$.
Note that we already computed the sets $A \cap B$ and $B \cap C$ in Step~\ref{gem-triple-computation}.
As there are $\mathcal O(n^3)$ many gem-triples and the update-step can be done in $\mathcal O(n)$ time, we obtain a running time of $\mathcal O(n^4)$ for Step~\ref{gem-stamps-step}.

Since we computed the data needed to perform steps~\ref{centerlocation}.(i) and \ref{centerlocation}.(ii) in the previous steps, both steps can be performed in time $\mathcal O(n^2)$ per maximal clique of $G$.
As $|\mathcal C| = \mathcal O(n)$, step~\ref{centerlocation}.(iii) can be done in $\mathcal O(n)$ time per maximal clique of $G$.
The computation of the set $X_A = \bigcap \mathcal C_A'' \setminus \bigcup \mathcal C_A'$ in Step~\ref{centerlocation}.(iv) can be done in $\mathcal O(n^2)$ time, as $|\mathcal C_A''| , |\mathcal C_A'| = \mathcal O(n)$.
Summing up, Step~\ref{centerlocation} can be done in time $\mathcal O(n^3)$.

We next consider Step~\ref{assigningcenters}.
It is straightforward that for each $C \in \mathcal C$ it holds that $X_C$ is an inclusion-maximal set of adjacent twins.
Thus, for each $C,C' \in \mathcal C$, $X_C$ and $X_{C'}$ are either identical or disjoint.
So, there is an injective assignment as described in Step~\ref{assigningcenters} if and only if for all $C \in \mathcal C$ holds 
\begin{equation}\label{eq:assignment}
|X_C|\ge|\{C' \in \mathcal C : X_{C'} = X_C\}|.
\end{equation}
Moreover, if (\ref{eq:assignment}) holds, such an assignment is immediately found.
In view of (\ref{eq:assignment}), Step~\ref{assigningcenters} can be performed in $\mathcal O(n^2)$ time.

As Step~\ref{placingcenteredges} is straightforward, we proceed to Step~\ref{crucialstep}.
Recall that the square root property can be checked in $\mathcal O(\Delta^2 n)$ for an $n$-vertex graph with maximum degree $\Delta$.
Moreover, ptolemaic graphs can be recognized in linear time, as they are exactly the chordal distance-hereditary graphs an both classes can be recognized in linear time (cf.~\cite{Gol04} resp.~\cite{HM90}).
Thus, Step~\ref{crucialstep} can be performed in $\mathcal O(n^3)$ time.

We obtain an overall runtime of $\mathcal O(n^4)$ for our algorithm.
\end{proof}

Thus, the above algorithm terminates in polynomial time.

\begin{lemma}\label{main-lemma}
If the input graph $G$ has a ptolemaic square root, the algorithm puts out a ptolemaic square root of $G$ that has a minimum number of edges.
\end{lemma}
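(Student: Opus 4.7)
The plan is to fix an arbitrary ptolemaic square root $R$ of $G$ (guaranteed by hypothesis) and use it as a witness for three claims: that no step of the algorithm returns ``failure''; that $|E(H)| \leq |E(R)|$; and that $H$ is itself a ptolemaic square root, so that Step~\ref{crucialstep} accepts.

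First I would verify that the algorithm does not abort. Step~\ref{chordalcheck} passes because $G=R^2$ is chordal by Theorem~\ref{square-of-ptolemaic-is-chordal}. For Step~\ref{assigningcenters}, Lemma~\ref{lem:centersexist} provides, for every maximal clique $C$ of $G$, a center $y_C$ of $C$ in $R$; Lemma~\ref{lem:center-vs-triple} places $y_C$ in $X_C$; and since centers of distinct maximal cliques have distinct closed neighborhoods in $R$, the assignment $C \mapsto y_C$ is an injective choice of $x_C \in X_C$, so an injective assignment exists.

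The crucial move for the edge bound is a twin-swap. Each $X_C$ is an equivalence class of adjacent twins in $G$ (as remarked in the runtime proof), because any two vertices of $X_C$ lie in exactly the same maximal cliques of $G$ (namely the members of $\mathcal C_C''$). Hence any permutation of $V(G)$ that permutes each $X_C$ internally and fixes the rest is an automorphism of $G$, and applying such a $\phi$ to $R$ produces another ptolemaic square root with the same number of edges. Since, for each equivalence class $X$, the sets $\{y_C : X_C = X\}$ and $\{x_C : X_C = X\}$ are subsets of $X$ of the same cardinality, I can choose $\phi$ so that $\phi(y_C) = x_C$ for every $C$ simultaneously. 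Under this relabeling I may assume that the algorithm's $x_C$ is itself a center of $C$ in $R$. Then every edge of $H$ already lies in $R$: the edges added in Step~\ref{gem-stamps-step} are in $E(R)$ by Lemma~\ref{lem:gem-triple-edges}, and each edge $x_C v$ added in Step~\ref{placingcenteredges} satisfies $v \in C = N_R[x_C]$. This yields $E(H) \subseteq E(R)$ and hence $|E(H)| \le |E(R)|$.

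It remains to argue that $H$ is a ptolemaic square root. The inclusion $H^2 \subseteq R^2 = G$ is immediate from $H \subseteq R$, and $G \subseteq H^2$ holds because every edge of $G$ lies in some maximal clique $C$ with both endpoints adjacent to $x_C$ in $H$. The main obstacle, and what I expect to be the delicate part, is proving that $H$ itself is ptolemaic: ptolemaicity is not inherited by spanning subgraphs, so the inclusion $H \subseteq R$ does not suffice. My plan here is a direct forbidden-subgraph analysis: if $H$ contained an induced $C_k$ with $k\ge 4$ or an induced gem, lifting that configuration to $R$ and invoking the chordality and gem-freeness of $R$ would produce a chord edge in $E(R)\setminus E(H)$; then Lemmas~\ref{lem:gem-triple-edges} and~\ref{lem:center-vs-triple}, together with the fact that $x_C = N_R[x_C] \cap x_C$ is a center in $R$, should force the same chord to have been installed into $H$ by either Step~\ref{gem-stamps-step} or Step~\ref{placingcenteredges}, a contradiction.
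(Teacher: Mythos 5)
Your first three components are sound and essentially coincide with the paper's argument: the chordality check passes by Theorem~\ref{square-of-ptolemaic-is-chordal}; by Lemma~\ref{lem:centersexist} and Lemma~\ref{lem:center-vs-triple} the centers of a minimum ptolemaic root $R$ land in the sets $X_C$, so Step~\ref{assigningcenters} succeeds; the twin-swap relabeling (legitimate, since each $X_C$ is a class of adjacent twins of $G$ and the induced permutation is an automorphism of $G$) lets you assume $x_C$ is a center of $C$ in $R$, whence $E(H)\subseteq E(R)$ by Lemma~\ref{lem:gem-triple-edges} and the definition of Step~\ref{placingcenteredges}, giving both the edge bound and $H^2=G$. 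All of this mirrors the paper.

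The genuine gap is the last component: you do not prove that $H$ is ptolemaic, you only announce a plan (``lifting that configuration to $R$ \dots should force the same chord to have been installed''). This is precisely the hard part, and in the paper it occupies the majority of the proof, via a chain of claims: (i) a structural claim that for two non-center vertices $u,v$, if every center in $N_H(u)$ lies in $N_H(v)$ then $N_H(u)\subseteq N_H[v]$ (this rests on the fact that every non-center edge of $H$ was stamped by some gem-triple in Step~\ref{gem-stamps-step}); (ii) a comparability claim for edges in $E(H^*)\setminus E(H)$; and (iii) detailed case analyses for an induced $C_4$, for longer induced cycles, and for an induced gem, which repeatedly invoke Lemma~\ref{domination} to find centers dominating $N_{H^*}$-neighborhoods and Lemma~\ref{P3center} to manufacture gem-triples in $G$ from pairs of centers at distance two in $H^*$, finally contradicting the absence of a forced edge. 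Your sketch names only Lemmas~\ref{lem:gem-triple-edges} and~\ref{lem:center-vs-triple}, omits Lemmas~\ref{domination} and~\ref{P3center} entirely, and does not supply the intermediate claims; moreover, for induced cycles of length at least five the paper's argument does not proceed by forcing the missing chord into $H$ at all, but by the neighborhood-comparability claim, so ``the same chord must have been installed'' is not even an accurate template for every case. Without carrying out this analysis, the proof is incomplete: $E(H)\subseteq E(R)$ by itself gives no control over induced $C_4$'s or gems in $H$, as you yourself note.
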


\begin{proof}
Assume that $G$ has a ptolemaic square root.
Then $G$ is chordal by Theorem~\ref{square-of-ptolemaic-is-chordal}, 
and the algorithm proceeds after Step~\ref{chordalcheck}.
Let $H^*$ be a ptolemaic square root of $G$ with a minimum number of edges.

As the algorithm proceeds after Step~\ref{chordalcheck}, for every maximal clique $C$ the sets $\mathcal C_C'$, $\mathcal C_C''$, and $X_C$ are computed.
By Lemma~\ref{lem:centersexist}, every maximal clique $C \in \mathcal C$ has a center, say $x_C^*$, in $H^*$.
Lemma~\ref{lem:center-vs-triple} implies that $x_C^*$ is not contained in any clique of $\mathcal C_C'$, but in every clique of $\mathcal C_C''$.
Thus, $x_C^* \in X_C$.
In particular, the injective assignment of centers as in Step~\ref{assigningcenters} is possible.
So, the algorithm injectively determines a center $x_C$ for every $C \in \mathcal C$.

Next we show that we may assume $E(H) \subseteq E(H^*)$.
As $N_{H^*}[x_C^*] = N_H[x_C]$ and $|\{x_C^*:C \in \mathcal C\}| = |\mathcal C| = |\{x_C:C \in \mathcal C\}|$,
we can relabel the vertices of $H^*$ such that $x_C^* = x_C$ for every $C \in \mathcal C$.
As mentioned above, for each $C \in \mathcal C$ the set $X_C$ is an inclusion-maximal set of adjacent twins in $G$.
Hence in ${H^*}^2$, the relabeling preserves closed neighborhoods, so the square of $H^*$ does not change.

As $x_C^* = x_C$ for every $C \in \mathcal C$, $\{vx_C : v \in C \setminus \{x_C\}, C \in \mathcal C\} \subseteq E(H^*)$.
Every other edge of $H$ is forced by Lemma~\ref{lem:gem-triple-edges}.
As Lemma~\ref{lem:gem-triple-edges} applies to $H^*$ as well, $E(H) \subseteq E(H^*)$.
In particular, $|E(H)| \le |E(H^*)|$, which is one part of the statement of Lemma~\ref{main-lemma}.

It remains to show that $H$ is a ptolemaic square root of $G$.
From $E(H) \subseteq E(H^*)$ it follows that $E(H^2) \subseteq E({H^*}^2) = E(G)$.
For every $C \in \mathcal C$, $N_H[x_C] = C$ and so $C$ is a clique in $H^2$.
This yields $E(H^2) \supseteq E(G)$, and so $H^2 = G$.

We complete the proof with a sequence of claims showing that $H$ is ptolemaic.
That is, $H$ is chordal and gem-free, by Theorem~\ref{characterization-ptolemaic}.

\begin{claim}\label{centerinclusion}
Let $u,v \in V(H)$ be two distinct vertices that are not centers. 
If for every center $x \in N_H(u)$ holds $x \in N_H(v)$, then $N_H(u) \subseteq N_H[v]$.
\end{claim}

Let $w \in N_H(u)$ be arbitrary.
If $w$ is a center, $w \in N_H[v]$ by assumption.

So assume that $w$ is not a center.
This means the edge $uw$ has been included to $H$ in some Step~\ref{gem-stamps-step}.
Hence, there is a gem-triple $(A,B,C)$ such that $u,w \in A \cap C$, $u \in A \cap C$ and $w \in (A \cap B) \setminus C$, or $w \in A \cap C$ and $u \in (A \cap B) \setminus C$. 
By assumption, $v$ is contained in every maximal clique of $G$ that $u$ is contained in.
Thus $v,w \in A \cap C$, $v \in A \cap C$ and $w \in (A \cap B) \setminus C$, or $w \in A \cap C$ and $v \in (A \cap B) \setminus C$.
This means the edge $vw$ is included to $H$, too.
This proves Claim~\ref{centerinclusion}.

\begin{claim}\label{unknownedges}
Every two vertices $u,v \in V(H)$ with $uv \notin E(H)$ and $uv \in E(H^*)$ satisfy $N_H(u) \subseteq N_H(v)$ or $N_H(v) \subseteq N_H(u)$.
\end{claim}

To prove our claim, let us assume that $N_H(u) \not\subseteq N_H(v)$.
As for every center $x \in V(H)$ holds $N_H(x)=N_{H^*}(x)$, both $u$ and $v$ cannot be centers.
Thus, by Claim~\ref{centerinclusion}, $u$ is adjacent to a center $x$ that is not adjacent to $v$.
Let us show that in $H$, $u$ is adjacent to every center $x$ with $xv \in E(H)$.
Claim~\ref{centerinclusion} then completes the proof.

For this, suppose that there is a center $y \in N_H(v) \setminus N_H(u)$.
We have $xy \notin E(H)$, as otherwise $xy \in E(H^*)$ and so $H^*[\{u,v,x,y\}] \cong C_4$, a contradiction.

By Lemma~\ref{domination}, there is a center $z$ with $N_{H^*}[u] \subseteq N_{H^*}[z]$.
Hence $xz,uz,vz \in E(H^*)$.
As $H^*$ is gem-free, $yz \notin E(H^*)$.
Similarly, there is a center $z'$ with $N_{H^*}[v] \subseteq N_{H^*}[z']$, that is, $uz',vz',yz' \in E(H^*)$.
Since $H^*$ is gem-free, $zz' \in E(H^*), xz' \notin E(H^*)$.
Let $A,B,C$ be the maximal cliques of $G$ corresponding to $x,z,z'$, in that order.
By Lemma~\ref{P3center}, $(A,B,C)$ is a gem-triple of $G$ with $u \in A \cap B \cap C$ and $v \in B \cap C$. 
But then the edge $uv$ has been added to $H$ in Step~\ref{gem-stamps-step} of the algorithm, a contradiction.
This proves Claim~\ref{unknownedges}.

\begin{claim}\label{chordalclaim}
$H$ is chordal.
\end{claim}

First suppose that $H$ contains an induced cycle $C$ of length at least five.
As $H^*$ is chordal, $C$ contains at least two vertices $u,v$ with $uv \notin E(H)$ and $uv \in E(H^*)$.
By Claim~\ref{unknownedges}, $N_H(u) \subseteq N_H(v)$ or $N_H(v) \subseteq N_H(u)$.
But this is a contradiction to the fact that $C$ is an induced cycle of length at least five.

So suppose that $H$ contains an induced $C_4$.
Let $v_1,v_2,v_3,v_4$ be a consecutive ordering of that $C_4$.
As $H^*$ is chordal, we may assume that $v_1v_3 \in E(H^*)$.
Thus $N_H[v_1] \neq N_{H^*}[v_1]$ and $N_H[v_3] \neq N_{H^*}[v_3]$, which in turn implies that both $v_1$ and $v_3$ cannot be centers.

Suppose that $N_H(v_1)$ and $N_H(v_3)$ are incomparable. 
Then Claim~\ref{centerinclusion} implies that there are centers $x \in N_H(v_1) \setminus N_H(v_3)$ and $y \in N_H(v_3) \setminus N_H(v_1)$.
By chordality of $H^*$, $xy \notin E(H)$: otherwise, $H^*[\{v_1,v_3,x,y\}] \cong C_4$.
By Lemma~\ref{domination}, there are centers $x',y'$ with $N_{H^*}[v_1] \subseteq N_{H^*}[x']$ and $N_{H^*}[v_3] \subseteq N_{H^*}[y']$.
As $H^*$ is ptolemaic, $xy' \notin E(H^*)$, since otherwise $H^*[\{x,v_1,v_3,y,y'\}] \cong \mbox{gem}$.
Hence $(N_{H^*}[x],N_{H^*}[x'],N_{H^*}[y'])$ is a gem-triple by Lemma~\ref{P3center}, with $v_1 \in N_{H^*}[x] \cap N_{H^*}[y']$ and $v_3 \in N_{H^*}[x'] \cap N_{H^*}[y']$.
This is a contradiction to the fact that $v_1v_3 \notin E(H)$.
So we may assume that $N_H(v_1) \subseteq N_H(v_3)$.

Suppose that both $v_2$ and $v_4$ are centers.
As $\dist_H(v_2,v_4)= \dist_{H^*}(v_2,v_4) = 2$, Lemma~\ref{P3center} applied to $H^*$ implies the existence of a maximal clique $C$ of $G$ such that $(N_{H^*}[v_2],C,N_{H^*}[v_4])$ is a gem-triple of $G$.
But $v_1,v_3 \in N_{H^*}[v_2] \cap N_{H^*}[v_4]$ and so in Step~\ref{gem-stamps-step}, the edge $v_1v_3$ was included into $H$, a contradiction. 
We may thus assume that $v_4$ is not a center.

As $N_H(v_1)$ and $N_H(v_4)$ are incomparable, there are centers $x \in N_H(v_1) \setminus N_H(v_4)$ and $y \in N_H(v_4) \setminus N_H(v_1)$, by Claim~\ref{centerinclusion}.
By Lemma~\ref{domination}, there are centers $x',y'$ with $N_{H^*}[v_1] \subseteq N_{H^*}[x']$ and $N_{H^*}[v_4] \subseteq N_{H^*}[y']$.
By chordality of $H^*$, $xy \notin E(H)$.
As $H^*$ is ptolemaic, $xy' \notin E(H^*)$, since otherwise $H^*[\{x,v_1,v_4,y,y'\}] \cong \mbox{gem}$.
Hence $(N_{H^*}[x],N_{H^*}[x'],N_{H^*}[y'])$ is a gem-triple by Lemma~\ref{P3center}, with $v_1 \in N_{H^*}[x] \cap N_{H^*}[y']$.
As $v_3 \in N_{H^*}(v_1) \cap N_{H^*}(v_4)$, $v_3 \in N_{H^*}[x'] \cap N_{H^*}[y']$.
This is a contradiction to the fact that $v_1v_3 \notin E(H)$.

Hence, $H$ does not contain any induced cycle of length at least 4.
This completes the proof of Claim~\ref{chordalclaim}.

\begin{claim}\label{gem-freeclaim}
$H$ is gem-free.
\end{claim}

Suppose that $H$ contains a gem as an induced subgraph, say on the vertices $v_1,v_2,v_3,v_4,v_5$.
W.l.o.g.~$(v_1,v_2,v_4,v_5)$ is the induced $P_4$ of this gem.
By Lemma~\ref{domination}, we may assume that $v_3$ is a center.

As $H^*$ is ptolemaic, $v_1v_4 \in E(H^*)$ or $v_2v_5 \in E(H^*)$.
We may assume that $v_1v_4 \in E(H^*)$.
Since $v_1v_4 \notin E(H)$, both $v_1$ and $v_4$ cannot be centers.
This situation is illustrated in Fig.~\ref{gem}.

\begin{figure}[htb]
\begin{center}
\psset{unit=1.5cm}
\begin{pspicture}(0,0)(3,1.2)

\cnode(0,0.5){0.1cm}{v_1}
\cnode(1,0.2){0.1cm}{v_2}
\cnode(1.5,1){0.1cm}{v_3}
\cnode(2,0.2){0.1cm}{v_4}
\cnode(3,0.5){0.1cm}{v_5}

\ncarc[arcangle=0]{-}{v_1}{v_2}
\ncarc[arcangle=0]{-}{v_2}{v_3}
\ncarc[arcangle=0]{-}{v_3}{v_4}
\ncarc[arcangle=0]{-}{v_4}{v_5}
\ncarc[arcangle=0]{-}{v_3}{v_1}
\ncarc[arcangle=0]{-}{v_3}{v_5}
\ncarc[arcangle=0]{-}{v_2}{v_4}
\ncarc[arcangle=20, linestyle=dashed]{-}{v_1}{v_4}

\nput{-125}{v_1}{$v_1$}
\nput{-35}{v_2}{$v_2$}
\nput{35}{v_3}{$v_3$}
\nput{-35}{v_4}{$v_4$}
\nput{-35}{v_5}{$v_5$}

\end{pspicture}
\end{center}
\caption{The situation in Claim~\ref{gem-freeclaim}. The dashed edge is in $E(H^*) \setminus E(H)$.}
\label{gem}
\end{figure}
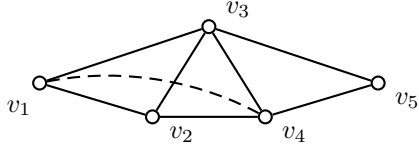

Suppose that $v_2$ is a center.
If $v_5$ is a center, too, then Lemma~\ref{P3center} implies that $(N_{H^*}[v_2],N_{H^*}[v_3],N_{H^*}[v_5])$ is a gem-triple.
Note that $v_1 \in N_{H^*}[v_2] \cap N_{H^*}[v_3]$ and $v_4 \in N_{H^*}[v_2] \cap N_{H^*}[v_3] \cap N_{H^*}[v_5]$.
But this is a contradiction to the assumption $v_1v_4 \notin E(H)$.
So, $v_5$ is not a center.

Suppose that there is a center $x \in N_H(v_4) \setminus N_H[v_2]$.
Then $x \in N_{H^*}(v_4) \setminus N_{H^*}[v_2]$.
By Lemma~\ref{domination}, there is some center $y \in V(H)$ with $N_{H^*}(v_4) \subseteq N_{H^*}[y]$.
Thus $(N_{H^*}[v_2],N_{H^*}[y],N_{H^*}[x])$ is a gem-triple by Lemma~\ref{P3center}, with $v_1 \in N_{H^*}[v_2] \cap N_{H^*}[y]$ and $v_4 \in N_{H^*}[v_2] \cap N_{H^*}[y] \cap N_{H^*}[x]$.
But this contradicts our assumption $v_1v_4 \notin E(H)$.

Hence, for every center $x \in N_H(v_4)$, $x \in N_H[v_2]$.
As both $v_4$ and $v_5$ are not centers and $v_4v_5 \in E(H)$, there is a gem-triple $(A,B,C)$ such that $v_4 \in A \cap B$ and $v_5 \in A \cap B \cap C$ or $v_4 \in A \cap B \cap C$ and $v_5 \in A \cap B$. 
In the first case, $v_2 \in A \cap B$ and in the second $v_2 \in A \cap B \cap C$, but this is a contradiction to $v_2v_5 \notin E(H)$.
So, $v_2$ is not a center.

As $N_H(v_2) \not \subseteq N_H[v_4]$ and $N_H(v_4) \not \subseteq N_H[v_2]$, Claim~\ref{centerinclusion} implies that there are centers $x \in N_H(v_2) \setminus N_H(v_4)$ and $y \in N_H(v_4) \setminus N_H(v_2)$.
Since $N_{H}[x] = N_{H^*}[x]$ and $N_{H}[y] = N_{H^*}[y]$, the chordality of $H^*$ implies that $xy \notin E(H^*)$ and hence $xy \notin E(H)$.
Otherwise, $H^*[\{v_2,v_4,x,y\}] \cong C_4$, a contradiction.
By Lemma~\ref{domination}, there are centers $x',y'$ with $N_{H^*}[v_2] \subseteq N_{H^*}[x']$ and $N_{H^*}[v_4] \subseteq N_{H^*}[y']$.
As $H^*$ is ptolemaic, $xy' \notin E(H^*)$, since otherwise $H^*[\{x,v_2,v_4,y,y'\}] \cong \mbox{gem}$.
Hence $(N_{H^*}[x'],N_{H^*}[y'],N_{H'}[y])$ is a gem-triple by Lemma~\ref{P3center}, with $v_4 \in N_{H'}[x'] \cap N_{H'}[y'] \cap N_{H'}[y]$.
See Fig.~\ref{edge-v1v4} for an illustration.
As $v_1 \in N_{H'}(v_4) \cap N_{H'}(v_2)$, $v_1 \in N_{H'}[x'] \cap N_{H'}[y']$.
But this implies $v_1v_4 \in E(H)$, a contradiction.
This completes the proof of Claim~\ref{gem-freeclaim}.
\end{proof}

\begin{figure}[htb]
\begin{center}
\psset{unit=2cm}
\begin{pspicture}(0,0)(3,2)

\cnode(0,1){0.1cm}{v_1}
\cnode(1,1){0.1cm}{v_2}
\cnode(1.5,2){0.1cm}{v_3}
\cnode(2,1){0.1cm}{v_4}
\cnode(3,1){0.1cm}{v_5}
\cnode(3,0){0.1cm}{y}
\cnode(0,0){0.1cm}{x}
\cnode(2,0){0.1cm}{y'}
\cnode(1,0){0.1cm}{x'}

\ncarc[arcangle=0]{-}{v_1}{v_2}
\ncarc[arcangle=0]{-}{v_2}{v_3}
\ncarc[arcangle=0]{-}{v_3}{v_4}
\ncarc[arcangle=0]{-}{v_4}{v_5}
\ncarc[arcangle=0]{-}{v_3}{v_1}
\ncarc[arcangle=0]{-}{v_3}{v_5}
\ncarc[arcangle=0]{-}{v_2}{v_4}
\ncarc[arcangle=35, linestyle=dashed]{-}{v_1}{v_4}
\ncarc[arcangle=0]{-}{v_1}{y'}
\ncarc[arcangle=0]{-}{v_2}{y'}
\ncarc[arcangle=0]{-}{v_3}{y'}
\ncarc[arcangle=0]{-}{v_4}{y'}
\ncarc[arcangle=0]{-}{v_5}{y'}
\ncarc[arcangle=0]{-}{v_2}{x}
\ncarc[arcangle=0]{-}{y}{v_4}
\ncarc[arcangle=0]{-}{v_1}{x'}
\ncarc[arcangle=0]{-}{v_2}{x'}
\ncarc[arcangle=0]{-}{v_3}{x'}
\ncarc[arcangle=0]{-}{v_4}{x'}
\ncarc[arcangle=0]{-}{y'}{x'}
\ncarc[arcangle=0]{-}{x'}{x}
\ncarc[arcangle=0]{-}{y'}{y}

\nput{-125}{v_1}{$v_1$}
\nput{145}{v_2}{$v_2$}
\nput{35}{v_3}{$v_3$}
\nput{45}{v_4}{$v_4$}
\nput{-35}{v_5}{$v_5$}
\nput{-125}{y}{$y$}
\nput{-35}{x}{$x$}
\nput{-125}{y'}{$y'$}
\nput{-35}{x'}{$x'$}

\end{pspicture}
\end{center}
\caption{The edge $v_1v_4$ is forced by the gem-triple $(N_H[x'],N_H[y'],N_H[y])$.}
\label{edge-v1v4}
\end{figure}
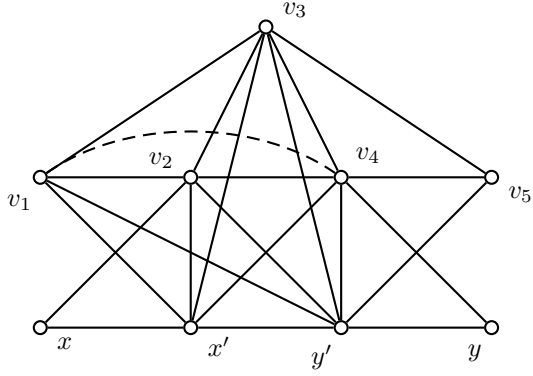

Finally, Theorem~\ref{result} is a direct consequence of the Lemmas~\ref{runtime} and~\ref{main-lemma}.

\section{Squares of 3-sun-free split graphs}\label{split}

In this section we prove Theorem~\ref{thm:recogsquareofs3freesplit}. Recall that deciding if a graph is the square of a strongly chordal split graph can be done in polynomial time (\cite{LT10a,LT11}). This result is based on the following characterization of squares of strongly chordal split graphs; 
the set of all maximal cliques in a graph $G$ is denoted by $\mathcal{C}(G)$. 

\begin{theorem}[\cite{LT10a,LT11}]\label{LT11Char}
$G$ is square of a strongly chordal split graph if and only if $G$ is strongly 
chordal and $\big|\bigcap_{Q \in \mathcal{C}(G)} Q\big|\ge |\mathcal{C}(G)|$.
\end{theorem}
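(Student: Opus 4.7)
For the forward direction, assume $G = H^2$ for a strongly chordal split graph $H$ with split partition $(K,I)$; the strong chordality of $G$ is the known fact that squares of strongly chordal graphs are strongly chordal. For the intersection inequality, first I would observe that $K \subseteq \bigcap_{Q \in \mathcal{C}(G)} Q$: since $K$ is a clique of $H$ and every non-isolated vertex of $I$ has a neighbor in $K$, each $v \in K$ is at distance at most two in $H$ from every other non-isolated vertex, hence adjacent to it in $G$, so any maximal clique of $G$ avoiding $v$ could be extended by $v$, a contradiction. Next, for each $Q \in \mathcal{C}(G)$, the vertices of $Q \cap I$ are pairwise adjacent in $G$ and hence pairwise share a neighbor in $K$ in $H$; the Helly property for closed neighborhoods in strongly chordal graphs then supplies a common $K$-neighbor $c_Q$ of $Q \cap I$. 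The assignment $Q \mapsto c_Q$ is injective: if $c_Q = c_{Q'}$, then $K \cup (Q \cap I) \cup (Q' \cap I)$ is a clique of $G$ containing both $Q$ and $Q'$, so maximality forces $Q = Q'$. Combining, $|\bigcap \mathcal{C}(G)| \ge |K| \ge |\mathcal{C}(G)|$.

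For the backward direction, assume $G$ is strongly chordal with $|\bigcap \mathcal{C}(G)| \ge |\mathcal{C}(G)|$. I would choose pairwise distinct $x_Q \in \bigcap \mathcal{C}(G)$, one per $Q \in \mathcal{C}(G)$, and define $H$ on $V(G)$ by putting $v\,x_Q \in E(H)$ whenever $v \in Q$ and $v \neq x_Q$. With $K := \{x_Q : Q \in \mathcal{C}(G)\}$ and $I := V(G)\setminus K$, the set $K$ is a clique of $H$ (each $x_Q$ lies in every maximal clique, so $x_Q x_{Q'} \in E(H)$), $I$ is independent by construction, and $H$ is split. A case analysis on the three possible locations of a vertex pair (inside $K$, across $K$ and $I$, inside $I$) shows $H^2 = G$; the key observation for the $I$-$I$ case is that two vertices of $I$ are adjacent in $G$ exactly when they share a maximal clique, which is exactly when they share a common $K$-neighbor in $H$, i.e., when their $H$-distance equals two.

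The main obstacle is proving $H$ has no induced $k$-sun. Suppose for contradiction that $H$ has one, with clique $\{v_1,\dots,v_k\} \subseteq K$, say $v_i = x_{Q_i}$, and independent set $\{u_1,\dots,u_k\} \subseteq I$ with $u_i$ adjacent in $H$ exactly to $v_i$ and $v_{i+1}$ (indices mod $k$). By the construction of $H$, each $u_i$ belongs to precisely the two maximal cliques $Q_i, Q_{i+1}$ of $G$, and the $Q_i$ are pairwise distinct since the $v_i$ are. For $k \ge 4$, consecutive $u_i, u_{i+1}$ share $Q_{i+1}$, so are adjacent in $G$, while non-consecutive $u_i, u_j$ share no maximal clique, so are non-adjacent, producing an induced $C_k$ in $G$ and contradicting its chordality. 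For $k = 3$, the pairs $u_1u_2,\,u_2u_3,\,u_1u_3$ share $Q_2,\,Q_3,\,Q_1$ respectively, so $\{u_1,u_2,u_3\}$ is a triangle in $G$; yet $\{Q_1,Q_2\} \cap \{Q_2,Q_3\} \cap \{Q_1,Q_3\} = \emptyset$ means no single maximal clique of $G$ contains all three $u_i$, contradicting that every clique extends to a maximal one. Hence $H$ is $k$-sun-free for every $k \ge 3$, completing the proof.
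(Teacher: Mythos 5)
Your forward direction is essentially sound and, apart from phrasing, follows the same route as the paper's proof of Theorem~\ref{thm:squareofs3freesplit}: the clique side $K$ consists of universal vertices of $G$ (this silently assumes $H$ is connected, as the cited theorem does), hence $K\subseteq\bigcap_{Q\in\mathcal{C}(G)}Q$, and an injective assignment $Q\mapsto c_Q\in K$ gives $|K|\ge|\mathcal{C}(G)|$; where the paper would invoke Lemma~\ref{lem:maxcliqueinsquareofs3-freesplit}, you invoke the (true, but nontrivial and citation-worthy) fact that closed neighborhoods in a strongly chordal graph have the Helly property. The construction of $H$ in the backward direction and the verification $H^2=G$ are also fine.

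The genuine gap is in the sun-freeness argument. From the $k$-sun being \emph{induced} in $H$ you may only conclude that $u_i$ is non-adjacent to the centers $x_{Q_j}$, $j\neq i,i+1$, that occur \emph{in the sun}; the vertex $u_i$ can perfectly well lie in further maximal cliques of $G$ whose chosen centers are not sun vertices. So the assertion that $u_i$ ``belongs to precisely the two maximal cliques $Q_i,Q_{i+1}$ of $G$'' is unjustified, and both of your contradictions collapse: for $k\ge4$, two non-consecutive $u_i,u_j$ may share such an extra clique and hence be adjacent in $G$, so no induced $C_k$ arises; for $k=3$, an extra maximal clique may contain all three $u_i$, so there is no conflict with extending the triangle to a maximal clique. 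This is not a cosmetic issue: a split graph with no induced $3$-sun need not be strongly chordal (the $4$-sun itself is a $3$-sun-free split graph), so the $k\ge4$ case is exactly where strong chordality of $G$, as opposed to chordality or $(G_1,\ldots,G_4)$-freeness, has to be used, and your argument never uses it there. What the $k$-sun in $H$ really yields is $k$ maximal cliques $Q_1,\dots,Q_k$ and vertices $u_i\in(Q_i\cap Q_{i+1})\setminus\bigcup_{j\neq i,i+1}Q_j$, i.e.\ a $k\times k$ cyclic submatrix of the vertex/maximal-clique incidence matrix of $G$; the clean way to finish is Farber's theorem that this matrix is totally balanced for strongly chordal $G$, or, for $k=3$, the maximality trick from the paper's proof of Theorem~\ref{thm:squareofs3freesplit} (pick private non-neighbours $x,y,z$ and obtain one of $G_1,\dots,G_4$, all excluded in strongly chordal graphs) together with a genuine reduction of the general case. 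Finally, a small omission: you assume the sun's clique part lies in $K$ and its independent part in $I$; this is easy (every edge of $H$ meets $K$, and a clique vertex in $I$ would force two of the $u_i$ into the clique $K$), but it does need saying.
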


We now are going to extend Theorem \ref{LT11Char} to $3$-sun-free split square roots. 
Our approach is based on the following fact about maximal cliques in squares of $3$-sun-free split graphs. 
A vertex with inclusion-maximal closed neighborhood is called a \emph{maximal} vertex. For split graphs $H=(V(H), E(H))$ we write $H=(C\cup I, E(H))$, meaning $V(H)=C\cup I$ is a partition 
of the vertex set of $H$ into a clique $C$ and an independent set $I$.

\begin{lemma}[\cite{LT10a,LT11}]\label{lem:maxcliqueinsquareofs3-freesplit}
Let $H=(C\cup I, E(H))$ be a connected split graph without induced $3$-sun. Then $Q$ is 
a maximal clique in $H^2$ if and only if $Q=N_H[v]$ for some maximal vertex 
$v\in C$ of $H$.
\end{lemma}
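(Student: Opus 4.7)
The plan is to prove the two implications separately. The $3$-sun-free assumption enters crucially in constructing a forbidden induced $3$-sun whenever the conclusion fails.

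For the ``$\Leftarrow$'' direction, let $v \in C$ be a maximal vertex and set $Q = N_H[v]$; clearly $Q$ is a clique in $H^2$. Suppose for contradiction that $Q \cup \{w\}$ is a clique in $H^2$ for some $w \notin Q$. Because $C$ is a clique of $H$ and $vw \notin E(H)$, necessarily $w \in I$. Let $U = N_H(v) \cap N_H(w) \cap C$; since $vw \in E(H^2)$ and $w \in I$, $U$ is the set of common neighbors of $v$ and $w$ in $H$, and $U \neq \emptyset$. For each $u \in U$, the relation $w \in N_H(u) \setminus N_H[v]$ together with the maximality of $v$ forces $N_H[u] \not\supseteq N_H[v]$, hence the existence of a witness $x_u \in N_H(v) \cap I$ with $x_u u \notin E(H)$. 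Since $x_u$ and $w$ lie in $I$ and are $H^2$-adjacent, they share a common neighbor in $C$, which must belong to $U$ and differ from $u$; in particular $|U| \ge 2$. Iterating this argument over elements of the finite set $U$ and invoking the maximality of $v$ also at vertices of $U$, one extracts $u_1, u_2 \in U$ and witnesses $x_1, x_2 \in I$ such that the six vertices $\{v, u_1, u_2\} \cup \{x_1, x_2, w\}$ induce an \emph{exact} $3$-sun in $H$, contradicting the hypothesis.

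For the ``$\Rightarrow$'' direction, let $Q$ be a maximal clique in $H^2$ and pick $v \in Q$ maximizing $|N_H[v] \cap Q|$, in analogy with the proof of Lemma~\ref{lem:centersexist}. If $Q \not\subseteq N_H[v]$, choose $x \in Q \setminus N_H[v]$ and apply the same $3$-sun construction to the pair $(v, x)$; this produces either a vertex in $Q$ with strictly larger intersection, contradicting the extremal choice of $v$, or directly a forbidden $3$-sun. Hence $Q \subseteq N_H[v]$, and since $N_H[v]$ is itself a clique in $H^2$ while $Q$ is maximal, $Q = N_H[v]$. In the non-degenerate case $|Q \cap I| \ge 2$ this forces $v \in C$, since any $v' \in I$ satisfies $N_H[v'] \cap I \subseteq \{v'\}$; the degenerate cases $|Q \cap I| \le 1$ are handled directly. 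Finally, $v$ must be a maximal vertex of $H$: any $v'$ with $N_H[v'] \supsetneq N_H[v] = Q$ would give a clique in $H^2$ strictly larger than $Q$, contradicting the maximality of $Q$.

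The main obstacle is the $3$-sun construction of the ``$\Leftarrow$'' direction. While the adjacencies within the $C$-side $\{v, u_1, u_2\}$ and the independencies within the $I$-side $\{x_1, x_2, w\}$ are automatic, one must ensure that no extra bipartite edges beyond those prescribed by the $3$-sun pattern appear. Controlling these unwanted edges requires choosing $u_1, u_2$ and their witnesses $x_1, x_2$ carefully: starting from an arbitrary $u_1 \in U$ and its witness $x_1$, the common neighbor of $x_1$ and $w$ in $C$ identifies a candidate $u_2$; one then invokes the maximality of $v$ at $u_2$, and possibly at further vertices of $U$, to find a compatible $x_2$.
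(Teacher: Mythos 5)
Since the paper only quotes this lemma from \cite{LT10a,LT11} and gives no proof of it, I judge your argument on its own merits. The scaffolding of both directions is sound: in the backward direction the reduction to a vertex $w\in I$ that is $H^2$-adjacent to all of $N_H[v]$, the nonempty set $U=N_H(v)\cap N_H(w)\subseteq C$, and the witnesses $x_u\in(N_H(v)\cap I)\setminus N_H(u)$ supplied by the maximality of $v$ are all correct; so are the closing observations of the forward direction (exclusion of $v\in I$ when $|Q\cap I|\ge 2$, and maximality of $Q$ forcing $v$ to be a maximal vertex). But there is a genuine gap exactly where $3$-sun-freeness has to do its work: the sentence ``iterating this argument \dots one extracts $u_1,u_2\in U$ and witnesses $x_1,x_2$'' is an assertion, not a proof, and your final paragraph concedes as much. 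The difficulty is real: the witness obtained from the maximality of $v$ at $u_2$ is only guaranteed to miss $u_2$; nothing makes it adjacent to $u_1$, and upon iterating, each new witness may be adjacent to none of the $u_i$ found so far, so it is never explained why the process terminates in the exact adjacency pattern of an \emph{induced} $3$-sun. The same unproved extraction is invoked again in the forward direction (``apply the same $3$-sun construction to the pair $(v,x)$''), where moreover the witnesses cannot come from vertex-maximality of $v$ but must come from the failure of $N_H[u]\cap Q\supseteq N_H[v]\cap Q$; your stated dichotomy is the right one, but the sun is again not actually produced, and the degenerate cases $|Q\cap I|\le 1$ are only waved at.

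The gap closes with an extremal (chain) argument rather than open-ended iteration. For $u\in U$ put $A_u=N_H(u)\cap N_H(v)\cap I$. If two of these sets are incomparable, pick $x_2\in A_{u_1}\setminus A_{u_2}$ and $x_1\in A_{u_2}\setminus A_{u_1}$; then $\{v,u_1,u_2\}\cup\{x_1,x_2,w\}$ is an induced $3$-sun, all unwanted edges being excluded automatically because $x_1,x_2,w\in I$ and the bipartite pattern is exactly prescribed. If instead $\{A_u:u\in U\}$ is a chain, let $u^*$ have $A_{u^*}$ maximum; its witness $x_{u^*}$ lies outside every $A_u$, hence has no neighbour in $U$ at all. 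But $x_{u^*}\in N_H[v]$, so $x_{u^*}$ and $w$ are $H^2$-adjacent, and any common neighbour of them lies in $N_H(w)\subseteq C$, is distinct from $v$, and therefore belongs to $N_H(v)\cap N_H(w)=U$ --- a contradiction. The identical argument with $A_u$ replaced by $N_H(u)\cap N_H(v)\cap Q\cap I$ and $w$ replaced by your vertex $x\in Q\setminus N_H[v]$ settles the forward direction: either some common neighbour $u$ of $v$ and $x$ satisfies $N_H[u]\cap Q\supsetneq N_H[v]\cap Q$, contradicting the choice of $v$, or an induced $3$-sun appears. Equivalently, and perhaps more cleanly, one can prove once the Helly-type statement that in a $3$-sun-free split graph every set of vertices of $I$ with pairwise distance two has a common neighbour in $C$ (an induction whose failure at three witnesses is precisely a $3$-sun), and derive both directions from it; this is the point of view the paper itself takes later via hereditary clique-Helly graphs.
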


Squares of $3$-sun-free split graphs can be characterized as follows (see Fig.~\ref{fig:H1234} for the graphs $G_1$--$G_4$).

\begin{theorem}\label{thm:squareofs3freesplit}
$G$ is the square of a connected $3$-sun-free split graph if and only if $G$ is $(G_1, G_2$, $G_3$, $G_4)$-free 
and satisfies $\big|\bigcap_{Q \in \mathcal{C}(G)} Q\big|\ge |\mathcal{C}(G)|$.
\end{theorem}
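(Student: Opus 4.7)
The plan is to prove the two directions of the equivalence separately, using Lemma~\ref{lem:maxcliqueinsquareofs3-freesplit} as the main structural tool throughout.

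\textbf{Necessity.} Suppose $H=(C\cup I, E(H))$ is a connected $3$-sun-free split graph with $H^2=G$. By Lemma~\ref{lem:maxcliqueinsquareofs3-freesplit} every maximal clique of $G$ has the form $N_H[v]$ for a maximal vertex $v\in C$. Since $C$ is a clique in $H$, we have $C\subseteq N_H[v]$ for each such $v$, hence $C\subseteq \bigcap_{Q\in\mathcal{C}(G)} Q$; because distinct maximal cliques of $G$ arise from distinct maximal vertices in $C$, the inequality $|\bigcap Q|\ge|C|\ge|\mathcal{C}(G)|$ is immediate. The forbidden-subgraph part I would prove as four contrapositives: for each $i\in\{1,2,3,4\}$, assuming an induced copy of $G_i$ in $H^2$, I lift its vertices back to $H$ and show that the required pattern of $H^2$-adjacencies and $H^2$-non-adjacencies, combined with $H$ being split and with the maximal-clique description of Lemma~\ref{lem:maxcliqueinsquareofs3-freesplit}, forces a $3$-sun in $H$.

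\textbf{Sufficiency.} Assume $G$ is $(G_1,G_2,G_3,G_4)$-free with $|\bigcap_{Q\in\mathcal{C}(G)} Q|\ge|\mathcal{C}(G)|$. I would build a candidate square root $H$ directly, in the same spirit as the strongly chordal case (Theorem~\ref{LT11Char}): pick an injective map $Q\mapsto x_Q$ from $\mathcal{C}(G)$ into $\bigcap_{Q'\in\mathcal{C}(G)} Q'$, set $C:=\{x_Q:Q\in\mathcal{C}(G)\}$ and $I:=V(G)\setminus C$, make $C$ a clique in $H$, and for every $v\in I$ join $v$ in $H$ to exactly those $x_Q$ such that $v\in Q$. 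The graph $H$ is split by construction. The identity $H^2=G$ is then verified in two inclusions: every $Q\in\mathcal{C}(G)$ is realized as $N_H[x_Q]$, so $E(G)\subseteq E(H^2)$; conversely, any $H$-path of length at most two between vertices of $G$ stays inside some $N_H[x_Q]=Q$, giving $E(H^2)\subseteq E(G)$.

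\textbf{Main obstacle.} The crux is showing that this $H$ is $3$-sun-free. I would argue by contradiction: a $3$-sun in $H$ consists of a clique triangle $\{v_1,v_2,v_3\}\subseteq C$ together with vertices $u_1,u_2,u_3\in I$ where each $u_i$ is adjacent in $H$ to exactly two of the $v_j$'s. By the construction, each $v_i$ equals some $x_{Q_i}$ and each $u_i$ lies in exactly two of $Q_1,Q_2,Q_3$. To derive a contradiction I pull additional witness vertices from each $Q_i$ (using maximality of $Q_i$ to avoid the other two), squaring everything back into $G$ to determine which edges and non-edges are forced. A careful case analysis on how the remaining adjacencies among these witnesses behave produces an induced copy of one of $G_1,\ldots,G_4$ in $G$, contradicting the hypothesis. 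Pinning down this case analysis to the exact structure of the four forbidden graphs is the technical heart of the proof; once it is in place, Theorem~\ref{thm:recogsquareofs3freesplit} follows by computing $\mathcal{C}(G)$, testing the four forbidden subgraphs and the intersection-size inequality, and (when they are satisfied) executing the above construction.
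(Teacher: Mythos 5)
Your proposal is correct and follows essentially the same route as the paper: the same counting argument via Lemma~\ref{lem:maxcliqueinsquareofs3-freesplit} for necessity, the same lifting of an induced $G_i$ in the square to a $3$-sun in the root, and the same construction of a split root from representatives chosen in $\bigcap_{Q \in \mathcal{C}(G)} Q$, with $3$-sun-freeness verified by the same maximality-witness argument producing one of $G_1,\ldots,G_4$. The only (harmless) deviation is that you place just the chosen representatives on the clique side, whereas the paper keeps all of $\bigcap_{Q \in \mathcal{C}(G)} Q$ in the clique; both variants square to $G$.
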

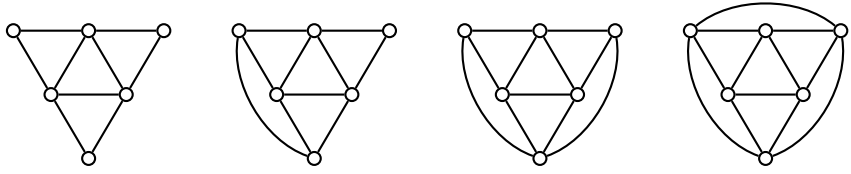
\begin{figure}[ht] 
\begin{center}
\begin{pspicture}(0,0)(11,1.9)

\cnode(0,1.7){0.1cm}{a_1}
\cnode(0.5,0.85){0.1cm}{a_2}
\cnode(1,0){0.1cm}{a_3}
\cnode(1,1.7){0.1cm}{a_4}
\cnode(1.5,0.85){0.1cm}{a_5}
\cnode(2,1.7){0.1cm}{a_6}

\ncarc[arcangle=0]{-}{a_1}{a_2}
\ncarc[arcangle=0]{-}{a_1}{a_4}
\ncarc[arcangle=0]{-}{a_2}{a_4}
\ncarc[arcangle=0]{-}{a_2}{a_3}
\ncarc[arcangle=0]{-}{a_2}{a_5}
\ncarc[arcangle=0]{-}{a_3}{a_5}
\ncarc[arcangle=0]{-}{a_4}{a_5}
\ncarc[arcangle=0]{-}{a_4}{a_6}
\ncarc[arcangle=0]{-}{a_5}{a_6}

\cnode(3,1.7){0.1cm}{b_1}
\cnode(3.5,0.85){0.1cm}{b_2}
\cnode(4,0){0.1cm}{b_3}
\cnode(4,1.7){0.1cm}{b_4}
\cnode(4.5,0.85){0.1cm}{b_5}
\cnode(5,1.7){0.1cm}{b_6}

\ncarc[arcangle=0]{-}{b_1}{b_2}
\ncarc[arcangle=-40]{-}{b_1}{b_3}
\ncarc[arcangle=0]{-}{b_1}{b_4}
\ncarc[arcangle=0]{-}{b_2}{b_4}
\ncarc[arcangle=0]{-}{b_2}{b_3}
\ncarc[arcangle=0]{-}{b_2}{b_5}
\ncarc[arcangle=0]{-}{b_3}{b_5}
\ncarc[arcangle=0]{-}{b_4}{b_5}
\ncarc[arcangle=0]{-}{b_4}{b_6}
\ncarc[arcangle=0]{-}{b_5}{b_6}

\cnode(6,1.7){0.1cm}{c_1}
\cnode(6.5,0.85){0.1cm}{c_2}
\cnode(7,0){0.1cm}{c_3}
\cnode(7,1.7){0.1cm}{c_4}
\cnode(7.5,0.85){0.1cm}{c_5}
\cnode(8,1.7){0.1cm}{c_6}

\ncarc[arcangle=0]{-}{c_1}{c_2}
\ncarc[arcangle=-40]{-}{c_1}{c_3}
\ncarc[arcangle=0]{-}{c_1}{c_4}
\ncarc[arcangle=0]{-}{c_2}{c_4}
\ncarc[arcangle=0]{-}{c_2}{c_3}
\ncarc[arcangle=0]{-}{c_2}{c_5}
\ncarc[arcangle=0]{-}{c_3}{c_5}
\ncarc[arcangle=-40]{-}{c_3}{c_6}
\ncarc[arcangle=0]{-}{c_4}{c_5}
\ncarc[arcangle=0]{-}{c_4}{c_6}
\ncarc[arcangle=0]{-}{c_5}{c_6}

\cnode(9,1.7){0.1cm}{d_1}
\cnode(9.5,0.85){0.1cm}{d_2}
\cnode(10,0){0.1cm}{d_3}
\cnode(10,1.7){0.1cm}{d_4}
\cnode(10.5,0.85){0.1cm}{d_5}
\cnode(11,1.7){0.1cm}{d_6}

\ncarc[arcangle=0]{-}{d_1}{d_2}
\ncarc[arcangle=-40]{-}{d_1}{d_3}
\ncarc[arcangle=0]{-}{d_1}{d_4}
\ncarc[arcangle=40]{-}{d_1}{d_6}
\ncarc[arcangle=0]{-}{d_2}{d_4}
\ncarc[arcangle=0]{-}{d_2}{d_3}
\ncarc[arcangle=0]{-}{d_2}{d_5}
\ncarc[arcangle=0]{-}{d_3}{d_5}
\ncarc[arcangle=-40]{-}{d_3}{d_6}
\ncarc[arcangle=0]{-}{d_4}{d_5}
\ncarc[arcangle=0]{-}{d_4}{d_6}
\ncarc[arcangle=0]{-}{d_5}{d_6}

\end{pspicture}
\end{center}
\caption{$G_1$, $G_2$, $G_3$, and $G_4$.}
\label{fig:H1234}
\end{figure}

\begin{proof}
A \emph{universal vertex} of $G$ is one that is adjacent to every other vertex of $G$. 
Note that, for any connected split-graph $H=(C\cup I, E(H))$, any vertex in $C$ is a universal 
vertex in $H^2$.

Assume that $G=H^2$ for some connected 3-sun-free split graph $H=(C\cup I, E(H))$. 
First, by Lemma~\ref{lem:maxcliqueinsquareofs3-freesplit},
\[ |\mathcal{C}(G)| \le |C|.\]
Furthermore, as $C$ is contained in all maximal cliques in $G$,
\[ |C| \le \big|\bigcap_{Q \in \mathcal{C}(G)} Q\big|.\]
Therefore, 
\[ |\mathcal{C}(G)| \le \big|\bigcap_{Q \in \mathcal{C}(G)} Q\big|.\]

\noindent
Next, let by way of contradiction, $a,b,c,a',b',c'$ be six vertices such that
\[ab, ac, bc, a'b, a'c, b'a, b'c, c'a, c'b\in E(H), aa', bb', cc'\not\in E(H),\]
\noindent
that is, $G[a,b,c,a',b',c']$ is a $G_i$ for some $i=1,2,3,4$.

Let $Q_1, Q_2, Q_3$ be the maximal cliques of $G$ containing $\{a,b,c'\}$, $\{b,c,a'\}$, $\{a,c,b'\}$, respectively. 
By Lemma~\ref{lem:maxcliqueinsquareofs3-freesplit}, $Q_i=N_H[v_i]$ for some (maximal) vertex $v_i\in C$, $i=1,2,3$. In particular, 
\[a,b\in N_H[v_1],\, b,c\in N_H[v_2],\, a,c\in N_H[v_3],\, a\not\in N_H[v_2],\, b\not\in N_H[v_3],\, c\not\in N_H[v_1].\]
By noting that $a,b,c,a',b',c'\in I$ (as none of these vertices is universal in $G$), we conclude that 
$a,b,c,v_1,v_2,v_3$ induce a 3-sun in $H$, a contradiction. 

\medskip\noindent
Now, let $G$ be $(G_1,G_2,G_3,G_4)$-free and satisfy $\big|\bigcap_{Q \in \mathcal{C}(G)} Q\big|\ge |\mathcal{C}(G)|$.

Write $C=\bigcap_{Q \in \mathcal{C}(G)} Q$ and let $\mathcal{C}(G)=\{Q_1,\ldots, Q_q\}$. As $|C|\ge q$, we are able to choose $q$ distinct vertices $c_1, \ldots, c_q$ in $C$. Let $H$ be the split graph with clique $C$, independent set $I=V(G)\setminus C$, and edges $vc_i$ for all $v\in I$ and $1\le i\le q$ with $v\in Q_i$. 

We claim that $G=H^2$.
Indeed, let $xy\in E(G)$. Then there $xy\in Q_i$ for some $i$. If $x\in C$ or $y\in C$, then clearly $xy\in E(H^2)$. If $x,y\in Q_i\setminus C$, then $xc_i, yc_i\in E(H)$, hence $xy\in E(H^2)$. 

Let $xy\in E(H^2)$. If $x\in C$ or $y\in C$, then $xy\in E(G)$ because $C$ is contained in all maximal cliques of $G$. 
So, let $x,y\in I$. Hence there is a vertex $c_i\in C$ with $xc_i, yc_i\in E(H)$. By construction of $H$, $x,y\in Q_i$, showing $xy\in E(G)$. 

We have shown that $G=H^2$, as claimed. It remains to prove that $H$ is 3-sun-free. 
Assume the contrary, and let $v_1=c_i, v_2=c_j, v_3=c_k$, $u_1, u_2, u_3$ induce a 3-sun in $H$. Then, by construction 
of $H$, 
\[
 u_1\in (Q_i\cap Q_j)\setminus Q_k,\, u_2\in (Q_j\cap Q_k)\setminus Q_i,\, u_3\in (Q_i\cap Q_k)\setminus Q_j.
\]
\noindent
Now, by the maximality of the cliques, $u_1$ is non-adjacent to some $x\in Q_k\setminus(Q_i\cup Q_j)$, $u_2$ is non-adjacent to some $y\in Q_i\setminus(Q_j\cup Q_k)$, and $u_3$ is non-adjacent to some $z\in Q_j\setminus(Q_i\cup Q_k)$.
But then $G[u_1,u_2,u_3,x,y,z]$ is one of the $G_1,G_2,G_3,G_4$, a contradiction. This completes the proof.
\end{proof}

We now are going to give an interesting reformulation of Theorem~\ref{thm:squareofs3freesplit}. 
A graph $G$ is said to be \emph{clique-Helly} if $\mathcal{C}(G)$ has the Helly property. 
$G$ is \emph{hereditary clique-Helly} if every induced subgraph of $G$ is  
clique-Helly. (See~\cite{DPS09} for more information on clique-Helly graphs.) 
Prisner~\cite{Prisner93} characterized hereditary clique-Helly graphs as follows.

\begin{theorem}[Prisner \cite{Prisner93}]\label{clique-Helly}
 $G$ is hereditary clique-Helly if and only if $G$ is $(G_1, G_2, G_3, G_4)$-free.
\end{theorem}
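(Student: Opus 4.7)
I would prove the two directions separately. For the forward direction, it suffices to show that each of $G_1,G_2,G_3,G_4$ is itself not clique-Helly. Labelling the vertices as in Figure~\ref{fig:H1234}, in each $G_i$ the triangles $\{a_1,a_2,a_4\}$, $\{a_2,a_3,a_5\}$, $\{a_4,a_5,a_6\}$ are maximal cliques and pairwise intersect exactly in the single vertices $a_2$, $a_4$, $a_5$ respectively, while their triple intersection is empty. Consequently, any graph $G$ containing some $G_i$ as an induced subgraph has an induced non-clique-Helly subgraph, and is therefore not hereditary clique-Helly.

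For the reverse direction, assume $G$ is not hereditary clique-Helly and pick an induced subgraph $G'$ of $G$ that is not clique-Helly. The crux is to exhibit three maximal cliques $Q_1,Q_2,Q_3$ of $G'$ that pairwise intersect while $Q_1\cap Q_2\cap Q_3=\emptyset$; I treat this point last. Given such a triple, for each cyclic permutation $(i,j,k)$ of $(1,2,3)$ choose $u_i\in(Q_j\cap Q_k)\setminus Q_i$. The vertices $u_1,u_2,u_3$ form a triangle since each pair lies in a common $Q_\ell$. The maximality of $Q_i$ in $G'$ means that $Q_i\cup\{u_i\}$ is not a clique, so there is some $v_i\in Q_i$ with $v_iu_i\notin E(G')$. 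Because $u_j,u_k\in Q_i$ and $u_i$ is adjacent to both $u_j$ and $u_k$, we have $v_i\neq u_j,u_k$; since $u_i\notin Q_i$ but $v_i\in Q_i$, also $v_i\neq u_i$. Moreover $v_1\neq v_2$: otherwise this common vertex would lie in $Q_2$, hence be adjacent to $u_1\in Q_2$, contradicting $v_1u_1\notin E(G')$. Analogously the six vertices $u_1,u_2,u_3,v_1,v_2,v_3$ are pairwise distinct. Finally $v_iu_j,v_iu_k\in E(G')$ because all three lie in $Q_i$. Thus, in the induced subgraph on $\{u_1,u_2,u_3,v_1,v_2,v_3\}$ the triangle $u_1u_2u_3$ is present, each $v_i$ is adjacent to exactly $u_j,u_k$ among the $u$'s, and the only undetermined edges are those among $v_1,v_2,v_3$. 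According as zero, one, two, or three of the edges $v_1v_2,v_1v_3,v_2v_3$ are present, the induced subgraph is isomorphic to $G_1$, $G_2$, $G_3$, or $G_4$, respectively.

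The technical obstacle is guaranteeing the existence of three maximal cliques with the required pattern of intersections. Starting from an inclusion-minimal non-Helly subfamily $\{Q_1,\ldots,Q_k\}\subseteq\mathcal{C}(G')$, the case $k=3$ is exactly what we want. For $k\geq 4$, minimality forces vertices $w_j\in\bigcap_{i\neq j}Q_i\setminus Q_j$ to exist, and the $w_j$ form a clique; I would then apply Szwarcfiter's characterization of clique-Helly graphs (a graph is clique-Helly iff every triangle admits a vertex in its extended triangle that is adjacent to all three triangle vertices) to the triangle $\{w_1,w_2,w_3\}$ to locate a ``bad'' triangle in $G'$ directly, or alternatively pass to a smaller induced subgraph of $G'$ still witnessing non-clique-Helly and induct on the vertex count, eventually reducing to $k=3$. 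This reduction step is where the real work lies; once accomplished, the extraction of the six-vertex configuration above finishes the proof.
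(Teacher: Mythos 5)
A preliminary remark: the paper does not prove this theorem at all --- it is quoted from Prisner \cite{Prisner93} --- so your argument has to stand on its own, and I judge it as such. Your forward direction is correct: in each $G_i$ the triangles $\{a_1,a_2,a_4\}$, $\{a_2,a_3,a_5\}$, $\{a_4,a_5,a_6\}$ are indeed maximal cliques (the added edges among $a_1,a_3,a_6$ never extend them), they pairwise intersect, and their total intersection is empty, so each $G_i$ fails to be clique-Helly and no hereditary clique-Helly graph can contain one as an induced subgraph. Your extraction of the six-vertex configuration from three maximal cliques $Q_1,Q_2,Q_3$ with vertices $u_i\in(Q_j\cap Q_k)\setminus Q_i$ is also correct, including the distinctness checks and the case analysis on the edges among $v_1,v_2,v_3$ giving $G_1,\dots,G_4$.

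The genuine gap is the step you yourself flag: you never actually produce the triple of cliques, deferring to Szwarcfiter's characterization (itself a nontrivial theorem, so this risks circularity or at least an unproved import) or to an unspecified induction, and you write that ``this reduction step is where the real work lies.'' As written the reverse direction is therefore incomplete. What you missed is that no reduction to a family of exactly three cliques is needed, and in fact you already wrote down the objects that close the gap. Take a minimal pairwise-intersecting family $Q_1,\dots,Q_k$ of maximal cliques of $G'$ with $\bigcap_{i=1}^k Q_i=\emptyset$ (so $k\ge 3$); by minimality there is, for each $j$, a vertex $w_j\in\bigcap_{i\neq j}Q_i$, and since the total intersection is empty, $w_j\notin Q_j$. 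Now run your own six-vertex construction on $Q_1,Q_2,Q_3$ with $u_i:=w_i$: your argument only ever uses that $u_i\in Q_j\cap Q_k$, that $u_i\notin Q_i$ (which yields $v_i\in Q_i$ non-adjacent to $u_i$ by maximality of $Q_i$), and never that $Q_1\cap Q_2\cap Q_3=\emptyset$. So the witnesses $w_j$ you exhibit in your last paragraph already finish the proof, and the appeal to Szwarcfiter or to induction can simply be deleted; this is the standard route to Prisner's theorem.
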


It follows that a split graph is hereditary clique-Helly if and only if it is $3$-sun-free. 
With Theorem~\ref{clique-Helly}, Theorem~\ref{thm:squareofs3freesplit} can be reformulated as follows.

\begin{theorem}\label{thm:squareofs3freesplit-v2}
A graph $G$ is the square of a connected hereditary clique-Helly split graph if and only if $G$ is a hereditary clique-Helly graph satisfying $\big|\bigcap_{Q \in \mathcal{C}(G)} Q\big|\ge |\mathcal{C}(G)|$.
\end{theorem}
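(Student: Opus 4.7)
The plan is to derive this theorem as a direct reformulation of Theorem~\ref{thm:squareofs3freesplit} using Prisner's characterization (Theorem~\ref{clique-Helly}). The key auxiliary fact, already mentioned just before the statement, is that a split graph is 3-sun-free if and only if it is hereditary clique-Helly. One direction is immediate from Prisner's theorem, since the 3-sun is $G_1$ and hereditary clique-Helly graphs are $(G_1, G_2, G_3, G_4)$-free. For the converse, since the class of split graphs is closed under taking induced subgraphs, it suffices to check that none of $G_2, G_3, G_4$ is itself a split graph. This is a short finite verification: inspecting the three six-vertex graphs one sees that each has independence number $2$ and clique number $3$, so the sum $\alpha+\omega=5$ is strictly less than the number of vertices, which rules out any split partition of the vertex set.

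With this equivalence in hand, both directions of the theorem follow at once. For the forward direction, if $G=H^2$ for a connected hereditary clique-Helly split graph $H$, then $H$ is 3-sun-free by the observation, so Theorem~\ref{thm:squareofs3freesplit} yields that $G$ is $(G_1, G_2, G_3, G_4)$-free and satisfies $\bigl|\bigcap_{Q\in\mathcal{C}(G)}Q\bigr|\ge|\mathcal{C}(G)|$; Prisner's theorem then gives that $G$ is hereditary clique-Helly. Conversely, if $G$ is hereditary clique-Helly and satisfies the intersection inequality, then $G$ is $(G_1, G_2, G_3, G_4)$-free by Prisner, so Theorem~\ref{thm:squareofs3freesplit} produces a connected 3-sun-free split graph $H$ with $H^2=G$, and by the observation this $H$ is itself hereditary clique-Helly.

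The only non-routine step is the opening observation about split graphs, which reduces to inspecting three fixed six-vertex graphs; the rest is pure bookkeeping combining the two cited theorems. I do not anticipate any genuine obstacle beyond this small case analysis.
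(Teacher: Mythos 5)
Your proof is correct and takes essentially the same route as the paper: the theorem is obtained as a direct reformulation of Theorem~\ref{thm:squareofs3freesplit} via Prisner's characterization (Theorem~\ref{clique-Helly}), using the fact that a split graph is $3$-sun-free if and only if it is hereditary clique-Helly. Your explicit verification that none of $G_2$, $G_3$, $G_4$ is a split graph (each has $\alpha+\omega=5<6$ vertices) correctly supplies the one detail the paper states without proof.
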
 

We can now give the proof of Theorem~\ref{thm:recogsquareofs3freesplit}.
\begin{proof}[of Theorem~\ref{thm:recogsquareofs3freesplit}]
By Lemma~\ref{lem:maxcliqueinsquareofs3-freesplit}, $G$ has at most $n$ maximal cliques. By \cite{TIAS77}, all maximal cliques in $G$
then can be listed in time $\mathcal{O}(n\cdot m \cdot n)=\mathcal{O}(n^2m)$, and the condition $\big|\bigcap_{Q \in \mathcal{C}(G)} Q\big|\ge |\mathcal{C}(G)|$ can be verified within in the same time. Also, testing if $G$ is hereditary clique-Helly can be done in time $\mathcal{O}(n^2m)$ (see, for instance, \cite{DPS09}). Thus, by Theorem~\ref{thm:squareofs3freesplit-v2}, we can decide in time $\mathcal{O}(n^2m)$ if $G$ is the square of some $3$-sun-free split graph, and if so, the proof of Theorem~\ref{thm:squareofs3freesplit} gives a construction for such a square root $H$ within the same time. 
\end{proof}

\section{Discussion}\label{discussion}

In this paper we have presented a polynomial time algorithm to decide whether a given graph has a ptolemaic square root.
If such a root exists, our algorithm computes a ptolemaic square root with a minimum number of edges.
Let us mention, without a proof, another feature of our algorithm:
if the input graph admits a square root that is a block graph or an acyclic graph, such a root is computed.
However, the best known algorithm to compute square roots in these two graph classes runs in linear time and is considerably simpler~\cite{LT10b}.

Several questions arise now that we can compute ptolemaic square roots.
It is immediate to ask whether ptolemaic $k$-th roots can be efficiently computed.
We did not tackle this question yet, since a more basic question is apparently unanswered: whether one can compute $k$-th roots that are block graphs~\cite{LT10b}.

A question that seems more urgent to us is whether distance-hereditary square roots can be computed efficiently.
Distance-hereditary roots have been considered before in the literature~\cite{BHN95}, yet not from an algorithmic perspective.

Although distance-hereditary graphs share a number of properties with ptolemaic graphs, the two classes behave differently when it comes to graph powers.
To give an example, Fig.~\ref{dhroot} displays the square of a distance-hereditary graph that does not admit a ptolemaic square root.
Indeed, both of our main tools, Lemma~\ref{lem:gem} and Lemma~\ref{lem:centersexist}, fail to hold for distance-hereditary graphs (see again Fig.~\ref{dhroot}).

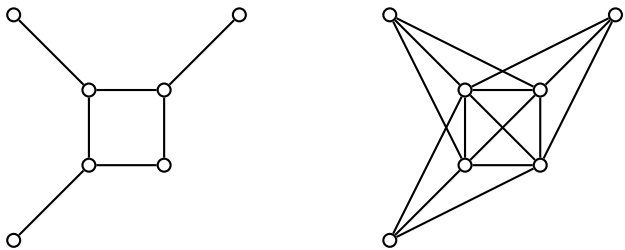
\begin{figure}[htb]
\begin{center}
\psset{unit=1cm}
\begin{pspicture}(0,-1)(7,2.2)

\cnode(0,2){0.1cm}{v_1}
\cnode(1,0){0.1cm}{v_2}
\cnode(1,1){0.1cm}{v_3}
\cnode(2,1){0.1cm}{v_33}
\cnode(2,0){0.1cm}{v_4}
\cnode(3,2){0.1cm}{v_5}
\cnode(0,-1){0.1cm}{v_6}

\ncarc[arcangle=0]{-}{v_1}{v_3}
\ncarc[arcangle=0]{-}{v_2}{v_3}
\ncarc[arcangle=0]{-}{v_2}{v_4}
\ncarc[arcangle=0]{-}{v_2}{v_6}
\ncarc[arcangle=0]{-}{v_3}{v_33}
\ncarc[arcangle=0]{-}{v_33}{v_4}
\ncarc[arcangle=0]{-}{v_33}{v_5}



\cnode(5,2){0.1cm}{v_1}
\cnode(6,0){0.1cm}{v_2}
\cnode(6,1){0.1cm}{v_3}
\cnode(7,1){0.1cm}{v_33}
\cnode(7,0){0.1cm}{v_4}
\cnode(8,2){0.1cm}{v_5}
\cnode(5,-1){0.1cm}{v_6}

\ncarc[arcangle=0]{-}{v_1}{v_3}
\ncarc[arcangle=0]{-}{v_1}{v_2}
\ncarc[arcangle=0]{-}{v_1}{v_33}
\ncarc[arcangle=0]{-}{v_2}{v_3}
\ncarc[arcangle=0]{-}{v_2}{v_33}
\ncarc[arcangle=0]{-}{v_2}{v_4}
\ncarc[arcangle=0]{-}{v_2}{v_6}
\ncarc[arcangle=0]{-}{v_3}{v_33}
\ncarc[arcangle=0]{-}{v_3}{v_4}
\ncarc[arcangle=0]{-}{v_3}{v_5}
\ncarc[arcangle=0]{-}{v_3}{v_6}
\ncarc[arcangle=0]{-}{v_33}{v_4}
\ncarc[arcangle=0]{-}{v_33}{v_5}
\ncarc[arcangle=0]{-}{v_4}{v_5}
\ncarc[arcangle=0]{-}{v_4}{v_6}

\end{pspicture}
\end{center}
\caption{A distance-hereditary graph (left) and its square (right).}
\label{dhroot}
\end{figure}

We also have characterized squares of $3$-sun-free split graphs. Our characterization yields a polynomial time recognition algorithm for such squares. Given the hardness of computing split square roots~(\cite{CL04}) and our polynomial time result (Theorem~\ref{thm:recogsquareofs3freesplit}), it is interesting to ask 
the following question: let $F$ be a fixed split graph, 
what is the computational complexity of computing $F$-free split square roots?  
Our result on 3-sun-free split square roots may pave the way towards such a dichotomy theorem.

\bibliographystyle{amsplain}
\bibliography{graph-powers}

\end{document}